\newcommand*\patchAmsMathEnvironmentForLineno[1]{%
  \expandafter\let\csname old#1\expandafter\endcsname\csname #1\endcsname
  \expandafter\let\csname oldend#1\expandafter\endcsname\csname end#1\endcsname
  \renewenvironment{#1}%
     {\linenomath\csname old#1\endcsname}%
     {\csname oldend#1\endcsname\endlinenomath}}%
\newcommand*\patchBothAmsMathEnvironmentsForLineno[1]{%
  \patchAmsMathEnvironmentForLineno{#1}%
  \patchAmsMathEnvironmentForLineno{#1*}}%
\newtheorem{theorem}{Theorem}
\newtheorem{lemma}[theorem]{Lemma}
\newtheorem{corollary}[theorem]{Corollary}
\newtheorem{observation}[theorem]{Observation}
\newtheorem{proposition}[theorem]{Proposition}
\newcommand{\slope}{\ensuremath{\textrm{sl}}}
\newcommand{\adj}{\textrm{adj}}
\newcommand{\seq}[1]{\langle #1 \rangle}
\newcommand{\CS}{\Sigma}
\newcommand{\Paths}{{\cal P}}
\newcommand{\Graphs}{{\cal G}}
\newcommand{\Dirs}{{\cal V}}
\newcommand{\Ords}{{\cal O}}
\newcommand{\vl}{\phi}
\newcommand{\dir}{\vec{v}}
\begin{document}

\title{Monotone Simultaneous Embeddings of Directed Paths\thanks{%
Research supported by the ESF EUROCORES
programme EuroGIGA - ComPoSe, Austrian Science Fund (FWF): I
648-N18 and grant EUI-EURC-2011-4306.
T.H.\ supported by the Austrian Science Fund (FWF): P23629-N18 `Combinatorial Problems on Geometric Graphs'.
A.P.\ is recipient of a DOC-fellowship of the Austrian
Academy of Sciences at the Institute for Software Technology, Graz University
of Technology, Austria. 
}
}

% \author{Oswin Aichholzer\inst{1}
% \and Thomas~Hackl\inst{1}
% \and Sarah~Lutteropp\inst{2}
% \and Tamara~Mchedlidze\inst{2}
% \and Alexander~Pilz\inst{1}
% \and Birgit~Vogtenhuber\inst{1}}
% 
% \institute{
% Institute for Software Technology,
% Graz University of Technology,
% Austria,
% \email{[oaich|thackl|apilz|bvogt]@ist.tugraz.at}.
% \and
% Institute of Theoretical Informatics, Karlsruhe Institute of Technology,
% Germany,
% \email{sarah.lutteropp@student.kit.edu, mched@iti.uka.de}.
% }

\author{Oswin Aichholzer\thanks{Institute for Software Technology, Graz University of Technology, Austria,
\texttt{[oaich|thackl|apilz|bvogt]@ist.tugraz.at}.}
\and Thomas~Hackl$^\dagger$
\and Sarah~Lutteropp\thanks{Institute of Theoretical Informatics, Karlsruhe Institute of Technology, Germany,
\texttt{sarah.lutteropp@student.kit.edu, mched@iti.uka.de}. A part of this research has been accomplished when the authors were visiting  Graz University of Technology.}
\and Tamara~Mchedlidze$^\ddagger$
\and Alexander~Pilz$^\dagger$
\and Birgit~Vogtenhuber$^\dagger$}

%\linenumbers
\maketitle

\begin{abstract}
We study monotone simultaneous embeddings of upward planar digraphs, which are simultaneous embeddings where the drawing of each digraph is upward planar, and the directions of the upwardness of different graphs can differ.
We first consider the special case where each digraph is a directed path. 
In contrast to the known result that any two directed paths admit a monotone simultaneous embedding, there exist examples of three paths that do not admit such an embedding for any possible choice of directions of monotonicity. 
We prove that if a monotone simultaneous embedding of three paths exists then it also exists for any possible choice of directions of monotonicity.
We provide a polynomial-time algorithm that, given three paths, decides whether a monotone simultaneous embedding exists and, in the case of existence, also constructs such an embedding. 
On the other hand, we show that already for three paths, any monotone simultaneous embedding might need a grid of exponential (w.r.t.\ the number of vertices) size.
For more than three paths, we present a polynomial-time algorithm that, given any number of paths and predefined directions of monotonicity, decides whether the paths admit a monotone simultaneous embedding with respect to the given directions, including the construction of a solution if it exists.
Further, we show several implications of our results on monotone simultaneous embeddings of general upward planar digraphs.
Finally, we discuss complexity issues related to our problems.
\end{abstract}

\pagenumbering{arabic}

\section{Introduction}
Let $\{G_i=(V,E_i)| 1\leq i \leq k\}$ be a set of $k$ distinct planar graphs sharing the same vertex set.
A \emph{simultaneous embedding} of these graphs is a set of their planar drawings $\{\Gamma_i | 1\leq i \leq k\}$ such that each vertex of $V$ is represented by the same point in the plane in each of the drawings.
Simultaneous embeddings were introduced as a model for visual comparison of different relations of the same object set, as well as for a visualization of dynamic changes of a single relation. %\todo{add referenc(es) here? or is the one at the end of this paragraph sufficient?}
Depending on how the edges in a drawing must be realized, simultaneous embeddings are distinguished as follows.
In \emph{geometric simultaneous embeddings} all edges are required to be straight segments.
In \emph{simultaneous embeddings with fixed edges}, there is no special restriction on the shape of the edges, but the common edges of the graphs are required to be drawn in the same way in each of the drawings.
Finally, a simultaneous embedding in which there is no restriction on how the edges are drawn is simply called a \emph{simultaneous embedding}.
Bl\"asius, Kobourov, and Rutter~\cite{gdhandbook13} give
an extensive overview of the known results for these types of simultaneous embeddings.

Simultaneous embeddings were also studied for upward planar digraphs.
A directed graph (\emph{digraph}, for short) is called \emph{upward planar} if it admits a planar drawing so that all its edges are represented by curves, monotonically increasing in a common direction, which is traditionally called \emph{upward}.
Upward drawings are motivated by a desire of a clearer expression of a hierarchy among a set of objects.
An \emph{upward simultaneous embedding} of $k$ upward planar digraphs is a set of upward planar drawings of given graphs, such that each vertex is represented by the same point in the plane in each of the drawings.
However, the choice of direction of ``upwardness'' to be common for all graphs does not make any sense.
It is easy to see that for any two graphs $G_1=(V,E_1)$ and $G_2=(V,E_2)$, an upward simultaneous embedding with only one direction of upwardness does not exist if $G=(V, E_1 \cup E_2)$ contains a directed cycle.
Motivated by this simple fact, Giordano et al.~\cite{book_embeddings} considered upward simultaneous embeddings where the directions of upwardness are different. They showed that any two upward planar digraphs admit an upward simultaneous embedding, where the directions of upwardness differ by $\pi/2$. Giordano, Liotta, and Whitesides~\cite{GiordanoLW08} gave a characterization of upward simultaneous embeddable digraphs with respect to the same direction.
Pampel~\cite[p.~71]{pampel} considered upward simultaneous embeddings of directed paths where the directions of upwardness are different and only a subset of vertices is common to all paths. This problem is known as \textsc{Strictly Monotone Trajectory Drawing}. 
Pampel~\cite[p.~71]{pampel}  showed that the problem is NP-hard even for paths with four vertices.

In this paper we study upward simultaneous embeddings for more than two graphs and different directions of upwardness. We formalize the problem as follows.
Let $\dir$ be a vector in $\mathbb{R}^2$. A drawing of a directed graph $G=(V,E)$  is called \emph{$\dir$-monotone}, if its edges are represented by (directed) curves which are monotonically increasing in the direction of $\dir$.
Let $\Dirs=\{\dir_1,\dots,\dir_k\}$ ($k > 1$) be a set of vectors and let $\Graphs=\{G_1, G_2, \dots, G_k\}$ be a set of $k$ distinct upward planar digraphs sharing the same vertex set.
A $\Dirs$-\emph{monotone simultaneous embedding} of $\Graphs$ is a set $\{\Gamma_1,\dots,\Gamma_k\}$ of planar drawings such that:
(1) $\Gamma_i$ is a $\dir_i$-monotone drawing of $G_i$ ($1 \leq i \leq k$) and
(2) an equally labeled vertex is represented by the same point in all drawings of the sequence.
A \emph{monotone simultaneous embedding} of $\Graphs$ is a $\Dirs$-monotone simultaneous embedding for some set $\Dirs$ of vectors. Observe that if a monotone simultaneous embedding of $\Graphs$ exists, it is naturally associated with a vector set $\Dirs$. However, these vectors are not required to be radially ordered around the origin. In case of monotone simultaneous embeddings of a sequence of graphs, this will be a requirement. But we return to this point in the next section, after explaining how the sequences and sets of paths are related.    We remark that determining whether a monotone simultaneous embedding exists is more restricted than the \textsc{Strictly Monotone Trajectory Drawing} problem, since in the former each vertex is shared by all of the paths.    

We study the following problem. Given a set $\Graphs$ of upward planar digraphs, we ask whether it admits a monotone simultaneous embedding. In order to shed light on this problem we also consider the following constrained version: Given a set $\Graphs$ of upward planar digraphs and a set $\Dirs$ of vectors, with $|\Graphs| = |\Dirs|$, we study whether $\Graphs$ admits a $\Dirs$-monotone simultaneous embedding. 

Recall that in monotone simultaneous embeddings, as defined above, there is a restriction on how the edges are drawn, that is, they are required to be represented by monotone curves in some direction. This fact makes monotone simultaneous embeddings completely different from simultaneous embeddings of undirected graphs, for which it is known that any number of planar graphs admits a simultaneous embedding~\cite{PW98}. As we have already mentioned, for monotone simultaneous embeddings, this is not the case, and existence of a monotone simultaneous embedding strongly depends on the choice of the directions of monotonicity. Intuitively, it is clear that the choice of such directions becomes more restricted as the order among the vertices of graphs becomes more strict. We first look at the core of this problem by assuming
that each of our graphs is a simple directed spanning path (i.e., directed from one end of the path to the other) of the common vertex set $V$. Then we prove several implications of our results on general upward planar digraphs. 

We start with introducing some tools in Section~\ref{sec:prel}, namely relations to circular sequences of point sets and the dual representation of the considered problem.
Our main results are concentrated in Sections~\ref{sec_easy}-\ref{sec_hard} and are as follows. 

\begin{itemize}
\item In Section~\ref{sec_three} 
%1
we consider sets of three directed paths. In contrast to sets of two paths which always admit a monotone simultaneous embedding (%Observation~\ref{obs_two_vectors}, 
	see also~\cite{Brass2007117,book_embeddings}), there exist sets of three paths which do not admit such an embedding~\cite{Asinowski2008}.
%2
We show that, if a monotone simultaneous embedding for three directed paths exists, then it also exists for any set of predefined directions. We also show that this result is tight with respect to the number of paths, i.e., it does not hold for four or more paths.
%3
Further, we provide an example of three paths for which any monotone simultaneous embedding requires a grid of exponential (w.r.t.\ the number of vertices) size.
\item In Section~\ref{sec_lp} 
%4
we consider larger sets of paths. We show that, given any set $\Paths=\{P_1,\dots,P_k\}$ of paths and a set $\Dirs=\{\dir_1,\dots,\dir_k\}$ of vectors, we can decide in polynomial time whether $\Paths$ admits a $\Dirs$-monotone simultaneous embedding, including the construction of a solution if it exists. 
%5
Together with the results from Section~\ref{sec_three}, this implies that, for $k=3$, the existence of a monotone simultaneous embedding (without predefined directions) can be decided in polynomial time, which answers a question posed by Asinowski in~\cite{Asinowski2008}. In case of existence, such an embedding can be constructed in polynomial time as well. 
\item In Section~\ref{sec:implications} 
%6 
we derive several implications of the aforementioned results on upward planar digraphs. Based on work of Giordano et al.~\cite{GiordanoLW08}, we first show that, given a set of $k$ upward planar digraphs, the question whether they admit a monotone simultaneous embedding can be reduced to the same question for $k$ paths, under the condition that the order of the vertices of the graphs in the projection on their direction of monotonicity is fixed.  We then state several results on monotone simultaneous embeddings of upward planar digraphs that are implied by this fact.  
\item Finally, in Section~\ref{sec_hard} 
%7
we discuss the complexity of monotone simultaneous embeddings of $k$ paths (on the same vertex set $V$) when the directions of monotonicity are not predefined.
We show that the construction problem becomes intractable for $k>3$ and give reasons why the complexity of the decision problem does not directly follow from the NP-hardness of deciding stretchability of pseudo-line arrangements~\cite{shor}.
%8
Also, we consider a generalization of monotone simultaneous embeddings of paths and show NP-hardness of this version of the problem.
\end{itemize}
We conclude in Section~\ref{sec_conc} with several open problems. 

\section{Preliminaries}
\label{sec:prel}

\subsection{Relation to circular sequences}
\label{sec:circ_seq}

The problem of monotone simultaneous embedding is strongly related to the \emph{circular sequence} of a point set (see Goodman and Pollack~\cite{non_circular,semispaces} for details).
The circular sequence of a point set was also used in the related work by Giordano, Liotta, and Whitesides~\cite{GiordanoLW08}.  
Let $\ell$ be any line and $S = \{s_1, \dots, s_n\}$  be a labeled set of points. 
Consider the orthogonal projection of $S$ on $\ell$.
This gives a permutation of the indices of the points.
Continuously rotate~$\ell$ counterclockwise.
Every time a supporting line of two points becomes normal to $\ell$, two indices change their position (we omit details concerned with collinear point triples and parallel supporting lines).
After having rotated $\ell$ by $180^\circ$, the initial permutation of indices is reversed.
Every pair of indices changed their relative position exactly once.
This sequence of permutations defines the \emph{circular sequence} of a point set $S$, which we denote by $\CS(S)$.%
\footnote{The circular sequence is infinite. However, a half-period, which corresponds to a rotation of $\ell$ by $180^\circ$, completely determines the sequence.}
An arbitrary periodic sequence of index permutations (not connected to any point set) which fulfills these properties (i.e., every pair changes its relative position exactly once per half-period) % and at the end of a half-period the initial permutation is reversed) 
is called an \emph{allowable sequence}.
Hence, a circular sequence is an allowable sequence that stems from a projection of a point set on a rotating line.
In our problem we are given a set of permutations (paths) and the question is whether there exists a point set such that its circular sequence contains the given set of permutations%
\footnote{Since we consider only a half-period $\CS$ of a circular sequence, we allow that, instead of the original, the reversed permutation appears in $\CS$.}.
In a related work, Asinowski~\cite{Asinowski2008} considered similar questions for allowable sequences: 
He introduced \emph{suballowable sequences}, which are subsequences of allowable sequences, and investigated their properties; see also the remarks in Section~\ref{sec_easy}. 
The following observations stem from properties of both, circular and allowable, sequences and most of them can also be found in (or derived from)~\cite{Asinowski2008}.

\begin{observation}
\label{obs:circular}
A set $\{P_1,\dots,P_k\}$ of directed paths  on a common set of $n$ vertices admits a monotone simultaneous embedding if and only if there exists a set $S$ of $n$ labeled points, such that for each $i=1,\dots,k$, the circular sequence $\CS(S)$ contains a permutation of indices defined either by $P_i$ or by its reverse path.  
\end{observation}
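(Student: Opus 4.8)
The plan is to reduce the two-dimensional embedding question to a one-dimensional one by observing that the only thing a direction of monotonicity ``sees'' about a drawing is the order in which the vertices project onto that direction. I would first isolate this as a single-path lemma: a directed path $P$ on the point set $S$ admits a $\dir$-monotone planar drawing with its vertices placed at $S$ if and only if the orthogonal projection of $S$ onto $\dir$ lists the points in the order in which $P$ visits them.

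For the forward implication, note that in any $\dir$-monotone drawing every edge is a curve monotonically increasing in the direction $\dir$; hence along the path the projections of consecutive vertices onto $\dir$ are strictly increasing, so the projection order is exactly the order of $P$. For the converse, I place the vertices at the points of $S$ and connect consecutive vertices of $P$ by straight segments. Since the endpoints of every segment have increasing projection onto $\dir$, the resulting polygonal chain is $\dir$-monotone; after rotating coordinates so that $\dir$ is horizontal, this is an $x$-monotone chain, and an $x$-monotone polygonal chain is automatically simple, so the drawing is planar. Here one needs the standard general-position assumption that no two points share a projection onto $\dir$, which matches the caveat about collinear triples and parallel supporting lines in the definition of the circular sequence.

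The remaining step is to translate ``there exists a direction $\dir$ realizing a prescribed projection order'' into a statement about $\CS(S)$. By definition, the circular sequence records exactly the permutations obtained by projecting $S$ onto a line $\ell$ as $\ell$ rotates; the permutation produced by projecting onto $\dir$ is one of these. Since $\CS(S)$ covers only a half-period, the order induced by $\dir$ and the reverse order induced by $-\dir$ are identified, and this is precisely the source of the ``$P_i$ or its reverse'' clause. Consequently, a permutation is realizable as a projection order of $S$ for some direction if and only if it, or its reverse, occurs in $\CS(S)$.

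Combining these, a monotone simultaneous embedding of $\{P_1,\dots,P_k\}$ on a common point set $S$ exists if and only if for each $i$ there is a direction $\dir_i$ whose projection order on $S$ equals that of $P_i$, which by the translation above holds if and only if $\CS(S)$ contains the permutation of $P_i$ or its reverse. Quantifying over all choices of $S$, and hence over all induced direction sets $\Dirs$, yields the claimed equivalence. I expect the main subtlety to be the simplicity-of-monotone-chains argument together with the bookkeeping of the half-period convention that forces the ``or its reverse'' alternative; everything else is a direct unwinding of the definitions.
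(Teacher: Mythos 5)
Your proof is correct and follows exactly the reasoning the paper intends: the paper states this as an unproved Observation ``stemming from properties of circular sequences,'' and your argument---the single-path projection lemma, simplicity of monotone polygonal chains for the converse, and the half-period convention accounting for the ``or its reverse'' clause---is precisely the definitional unwinding that justifies it. No gaps; the only remark is that the general-position caveat you invoke is automatic here, since monotonicity along a spanning path already forces all $n$ projections onto $\dir_i$ to be distinct.
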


Note that only for a circular sequence $\CS$ one can construct a point set $S$ such that $\CS=\CS(S)$. 
This fact gives us a necessary condition for the existence of monotone simultaneous embeddings. 
This necessary condition is the target of the remainder of this section, for which we need further preparation.
Consider a set $\Paths$ of paths on a vertex set~$V=\{v_1,\dots,v_k\}$. 
We denote by $I(i,P)$ the rank of $v_i$ in path $P \in \Paths$.
A set of vectors which are all directed to the same half-plane is called \emph{adjusted}. 
We denote by $\Paths_{ij}^{\adj}$ the set of paths which results from reversal of some of the paths of $\Paths$ so that $I(i,P) < I(j,P)$ for each $P\in \Paths$. If $\Paths=\Paths^{\adj}_{ij}$ for some indices $i,j$, $1\leq i,j \leq n$, then $\Paths$ is called \emph{adjusted}. 
Recalling Observation~\ref{obs:circular}, and noticing that in an adjusted set of paths some indices $i$ and $j$ do not switch their position, we get the following.

\begin{observation}
\label{obs:adjusted}
If an adjusted set of $k$ paths admits a $\{\dir_1,\dots,\dir_k\}$-monotone simultaneous embedding, then vectors
$\{\dir_1,\dots,\dir_k\}$ are also adjusted.  
\end{observation}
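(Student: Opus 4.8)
The plan is to unwind the geometric meaning of a $\dir_\ell$-monotone drawing into a statement about orthogonal projections, and then read off a single half-plane containing all of the direction vectors. Fix a $\{\dir_1,\dots,\dir_k\}$-monotone simultaneous embedding, and let $p_m\in\mathbb{R}^2$ denote the common point representing vertex $v_m$. By definition, the edges of $P_\ell$ are drawn as curves that are monotonically increasing in the direction $\dir_\ell$; consequently, reading $P_\ell$ from its source to its sink, the inner products $\dir_\ell\cdot p_m$ strictly increase along the path. In particular, whenever $v_a$ precedes $v_b$ in $P_\ell$ (that is, $I(a,P_\ell)<I(b,P_\ell)$), transitivity along the path gives $\dir_\ell\cdot p_a<\dir_\ell\cdot p_b$.

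Next I would invoke the hypothesis that $\Paths$ is adjusted. By the definition of $\Paths^{\adj}_{ij}$, this means there are two vertices $v_i,v_j$ with $I(i,P_\ell)<I(j,P_\ell)$ for every path $P_\ell$, i.e.\ $v_i$ precedes $v_j$ in all $k$ paths simultaneously. Applying the projection inequality from the previous step with $a=i$ and $b=j$ to each path yields $\dir_\ell\cdot p_i<\dir_\ell\cdot p_j$ for all $\ell\in\{1,\dots,k\}$. Setting $w=p_j-p_i$, this is exactly $\dir_\ell\cdot w>0$ for every $\ell$.

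To finish, note that $w\neq 0$: the strict inequality $\dir_\ell\cdot p_i<\dir_\ell\cdot p_j$ already forces $p_i\neq p_j$ (equivalently, distinct vertices receive distinct points in any valid monotone drawing). Hence $w$ is a fixed nonzero vector, and the open half-plane $\{x\in\mathbb{R}^2:x\cdot w>0\}$ contains all of $\dir_1,\dots,\dir_k$. By the definition of an adjusted vector set, this is precisely the assertion that $\{\dir_1,\dots,\dir_k\}$ is adjusted.

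I expect the argument to be essentially a matter of carefully translating the three relevant definitions (monotone drawing, adjusted paths, adjusted vectors) into the common language of projections onto the $\dir_\ell$. The only point requiring a word of care is the nondegeneracy $w\neq 0$, which however comes for free from the strictness of monotonicity; so I anticipate no genuine obstacle, and the main effort is bookkeeping rather than any substantive difficulty.
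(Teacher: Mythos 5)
Your proof is correct and follows essentially the same route as the paper: the paper's one-sentence justification (via Observation~\ref{obs:circular}, noting that the adjusted pair $i,j$ never switches its order under projection) is exactly what you make explicit with the inner-product argument, namely that all $\dir_\ell$ must lie in the open half-plane $\{x : x\cdot(p_j-p_i)>0\}$. Your version simply unwinds the circular-sequence language into direct projection inequalities; no gap.
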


Until now we only talked about sets of paths. In order to relate a set of paths to a circular sequence we need to consider an order among the elements of a set of paths. Thus, we will denote by $\seq{\Paths}$ an ordering of the elements of $\Paths$. Analogously to circular sequences, if for any triple of paths $P_a$, $P_b$, $P_c \in \seq{\Paths}$, where $a < b < c$, there exists a pair of vertices $v_i, v_j \in V$ with $I(i,P_a) < I(j,P_a)$ and $I(i, P_b) > I(j,P_b)$ for which it holds that $I(i,P_c) < I(j,P_c)$, we say that the sequence $\seq{\Paths}$ of paths is \emph{non-allowable}. It is \emph{allowable} otherwise. More generally, a set $\Paths$ of paths is called \emph{allowable} if its elements can be ordered to form an allowable sequence of paths.

\begin{proposition}\label{alowable_set}
Let $\Paths$ be an adjusted set of $k$ paths on the same set of $n$ vertices.  An allowable sequence $\seq{\Paths}$ of paths -- if it exists -- is unique up to reversal and can be constructed within $O(kn^2)$ time. 
\end{proposition}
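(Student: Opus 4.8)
The plan is to reduce everything to the behaviour of \emph{inversion sets}. For two paths $P,R\in\Paths$ let $\mathrm{Inv}_R(P)$ denote the set of unordered vertex pairs $\{v_i,v_j\}$ on which $P$ and $R$ disagree, i.e.\ one of them has $I(i,\cdot)<I(j,\cdot)$ while the other has $I(i,\cdot)>I(j,\cdot)$. Since the paths are pairwise distinct permutations, $\mathrm{Inv}_R(P)=\mathrm{Inv}_R(P')$ forces $P=P'$, so these sets are pairwise distinct for distinct paths, and $\mathrm{Inv}_R(P)\neq\emptyset$ whenever $P\neq R$. The single structural fact driving the whole proof is a dichotomy for inversion sets taken against a fixed reference $R=P_m$ of an allowable ordering $\seq{P_1,\dots,P_k}$: two paths lying on the \emph{same} side of $R$ have \emph{nested} inversion sets, while two paths on \emph{opposite} sides of $R$ have \emph{disjoint} inversion sets. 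Both claims are immediate from the definition of a non-allowable sequence. If a pair $\{v_i,v_j\}$ were inverted (against $R$) in the nearer of two same-side paths but not in the farther one, or inverted in one path on each side of $R$, then the triple formed by these two paths together with $R$ would realise the forbidden ``agree--disagree--agree'' pattern on $\{v_i,v_j\}$, contradicting allowability; the same argument shows that on a fixed side the inversion sets grow monotonically with the distance from $R$.

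First I would use this dichotomy to prove uniqueness up to reversal. Fix any $R\in\Paths$ and any second path $A\neq R$. The dichotomy says that in \emph{every} allowable ordering a path $P\neq R$ lies on the same side of $R$ as $A$ exactly when $\mathrm{Inv}_R(P)\cap\mathrm{Inv}_R(A)\neq\emptyset$: same side gives nested, nonempty sets that must intersect, and opposite sides give disjoint sets. Because this test refers only to the paths and not to any ordering, the partition of $\Paths\setminus\{R\}$ into the two sides of $R$ is intrinsic, and is therefore the same for any two allowable orderings. Within each side the inversion sets form a chain under inclusion and are pairwise distinct, hence are strictly totally ordered by cardinality; by the monotonicity above this order is precisely the order of increasing distance from $R$, so the internal order of each side is forced. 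The only remaining freedom is which side is placed first, which is exactly a global reversal. Consequently any two allowable orderings coincide up to reversal.

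The same reasoning yields the construction. I would pick an arbitrary reference $R$, precompute the ranks $I(i,P)$, and then compute $\mathrm{Inv}_R(P)$ for all $k$ paths by comparing each of the $\binom{n}{2}$ vertex pairs, storing each set as a bit-vector, in total time $O(kn^2)$. Fixing an arbitrary $A\neq R$, I would split the remaining paths into the two classes according to whether their inversion set meets $\mathrm{Inv}_R(A)$, again in $O(kn^2)$. Sorting each class by $|\mathrm{Inv}_R(\cdot)|$ and concatenating one class in decreasing order, then $R$, then the other class in increasing order produces the candidate sequence, the sorting costing $O(k\log k)\subseteq O(kn^2)$ since $k\le n!$. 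Finally I would verify the candidate by checking, for each of the $O(n^2)$ pairs, that its relative order switches at most once along the sequence, in $O(kn^2)$ time overall. If the check succeeds we output the sequence; if it fails, uniqueness guarantees that no allowable ordering exists. The total running time is $O(kn^2)$.

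The main obstacle is establishing the nested-versus-disjoint dichotomy for an \emph{arbitrary} reference $R$ rather than only for an endpoint of the sequence, because it is this order-independent characterisation of ``same side'' that simultaneously delivers uniqueness and lets the construction avoid an expensive search for an endpoint. Two points need care: that distinct paths always have distinct inversion sets, so that cardinalities strictly separate each chain and the sort is unambiguous; and that ``opposite sides'' forces genuine disjointness rather than mere non-nestedness, so that the intersection test is both necessary and sufficient. Both are consequences of the forbidden agree--disagree--agree pattern, i.e.\ of the fact that along an allowable ordering every vertex pair reverses its relative order at most once; the adjustedness hypothesis enters only to place the paths in the regime of directions confined to a common half-plane, which is exactly the regime in which this at-most-one-reversal property holds.
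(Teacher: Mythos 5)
Your proof is correct, but it takes a genuinely different route from the paper's. The paper does not prove uniqueness itself---it cites it as a basic property of allowable sequences (Goodman--Pollack, Asinowski)---and its algorithm is a recursive partitioning: pick a vertex pair $(i,j)$, split $\Paths$ into the two blocks that must appear contiguously in any allowable ordering, recurse inside each block on further pairs, orienting the sub-blocks consistently with the other side, and declare non-existence whenever some partition step cannot be carried out. You instead derive everything from a single structural dichotomy: relative to a reference path $R$ inside an allowable ordering, same-side paths have nested inversion sets $\mathrm{Inv}_R(\cdot)$ while opposite-side paths have disjoint ones. This gives an ordering-independent ``same side as $A$'' test (nonempty intersection with $\mathrm{Inv}_R(A)$), which proves uniqueness up to reversal outright, and it turns the construction into a flat procedure---partition by the intersection test, sort each side by inversion count, concatenate around $R$, and verify allowability at the end---rather than a recursion. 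What each approach buys: yours is fully self-contained (uniqueness is proven, not imported) and its correctness in the negative case is transparent because of the terminal verification pass; the paper's recursion stays closer to Asinowski's machinery and detects infeasibility on the fly without a separate verification. Both meet the $O(kn^2)$ bound. One small imprecision in your closing remark: the ``at most one reversal per pair'' property is not a consequence of adjustedness---it is the definition of an allowable sequence of paths; adjustedness merely fixes a pair of indices that never swap, so that the set (rather than some re-reversal of it) can be ordered at all. Nothing in your argument actually depends on that remark.
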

\begin{proof}
The fact that, if there exists $\seq{\Paths}$ which is an allowable sequence of paths then it is unique up to reversal, follows from the basic properties of allowable sequences (see~\cite{Asinowski2008, non_circular}). 

In the following, we provide a constructive algorithm %on for an allowable sequence of paths $\seq{\Paths}$ that is 
similar to the one discussed in~\cite[page 4751]{Asinowski2008} in order to be able to analyze its running time. 

Let $i$ and $j$ be two arbitrary indices. We partition~$\Paths$ into two sets $\Paths_1 = \{P \in \Paths : I(i,P) > I(j,P)\}$ and $\Paths_2 = \{P \in \Paths : I(i,P) < I(j,P)\}$. In an allowable sequence $\seq{\Paths}$ the elements of $\Paths_1$ (resp. $\Paths_2$) appear consecutive. Thus, an allowable $\seq{\Paths}$ is either $\seq{\Paths_1}$ concatenated with $\seq{\Paths_2}$ or the other way around. We assume that the former happens, in the latter case the arguments are similar.    

Consider two indices~$k$ and~$l$, that change their relative position among the elements of~$\Paths_1$ (resp. $\Paths_2$). Then they cannot change it in $\Paths_2$ (resp. $\Paths_1$) as well.
The collection $\Paths_1$ (resp. $\Paths_2$) can be partitioned recursively by the pair $k,~l$.
Since the indices $k$ and $l$ change their relative position among the elements of $\Paths_1$ (resp. $\Paths_2$), we have to arrange the resulting two partitions in a way that the second (resp. first) partition have the same relative position of $k$ and $l$ as the elements of $\Paths_2$ (resp. $\Paths_1$).
If at some recursive step we can not apply this partition operation, this implies that no allowable sequence $\seq{\Paths}$ exists. 

For the running time of this algorithm, note that in $O(kn)$ time, all paths can be preprocessed such that checking whether $I(i,P) < I(j,P)$ can be done in constant time. Then each step of the above recursion takes $O(k)$ time. Together with the fact that there are $O(n^2)$ pairs of indices that can switch their position, we obtain a total running time of $O(kn + kn^2) = O(kn^2)$.
\end{proof}

We are now ready to state a necessary condition for the existence of monotone simultaneous embeddings that stems from the properties of circular sequences. 
\begin{lemma}
\label{lemma:allowable}
Let $\Paths$ be a set of $k$ paths on a set of $n$ vertices. If $\Paths$ admits a monotone simultaneous embedding then for every pair of indices $i,j$, $1\leq i,j\leq n$, the set $\Paths_{ij}^{\adj}$ is allowable.
\end{lemma}
\begin{proof}
Assume that $\Paths$ admits a monotone simultaneous embedding. We will show that for any possible choice of indices  $1\leq i,j \leq n$, the set $\Paths_{ij}^{\adj}$ is allowable. Let $S$ be a set of points representing vertices of the paths in $\Paths$, in a $\{\dir_1,\dots,\dir_k\}$-monotone simultaneous embedding of $\Paths$. Let $i$ and $j$ be two arbitrary indices. Rotate $S$ and vectors $\{\dir_1,\dots,\dir_k\}$ until the line supporting $i$ and $j$ is vertical and $i$ appears before $j$ in the positive y-direction. Let $S'$ be the resulting point set. Reverse those vectors of $\{\dir_1,\dots,\dir_k\}$ that point in the negative $y$-direction and also the corresponding paths of $\Paths$. Let $\{\dir_1',\dots,\dir_k'\}$ and $\Paths'$ denote the resulting vectors and paths, respectively. In $\Paths'$, index $i$ appears before index $j$ in each of the paths. Thus we have  that $\Paths'=\Paths_{ij}^{\adj}$. Moreover, for any path $P_a \in \Paths'$, the permutation defined by $P_a$ is an element of $\CS(S')$ (the corresponding element of $\CS(S')$ is created when points of $S'$ are projected on the line given by $\dir_a'$).  Thus, the constructed $\Paths_{ij}^{\adj}$ is allowable. 
\end{proof}

Lemma~\ref{lemma:allowable} and Proposition~\ref{alowable_set} imply the following. 
\begin{corollary}
\label{cor:set_to_sequence}
If a set $\Paths$ of $k$ paths with $n$ vertices admits a monotone simultaneous embedding, then for every $i,j$, $1\leq i,j\leq n$, there exists an ordering $\seq{\Paths_{ij}^{\adj}}$ of $\Paths_{ij}^{\adj}$ which is allowable. The sequence $\seq{\Paths_{ij}^{\adj}}$ can be computed within $O(kn^2)$ time.
\end{corollary}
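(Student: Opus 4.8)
The plan is to combine the two preceding results directly, with the only real work being to verify that their hypotheses line up. First I would assume that $\Paths$ admits a monotone simultaneous embedding and fix arbitrary indices $i,j$ with $1\le i,j\le n$. Applying Lemma~\ref{lemma:allowable} immediately yields that the set $\Paths_{ij}^{\adj}$ is allowable. By the definition of an allowable \emph{set}, this means precisely that the elements of $\Paths_{ij}^{\adj}$ can be ordered into an allowable \emph{sequence}; hence such an ordering $\seq{\Paths_{ij}^{\adj}}$ exists, which establishes the first claim. Since $i,j$ were arbitrary, this holds for every pair of indices.

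For the running-time bound I would invoke Proposition~\ref{alowable_set}, whose constructive algorithm computes an allowable sequence of paths in $O(kn^2)$ time. The one point that requires care is checking that $\Paths_{ij}^{\adj}$ satisfies the hypothesis of that proposition, namely that it is an adjusted set. This follows directly from the definitions: by construction every $P\in\Paths_{ij}^{\adj}$ satisfies $I(i,P)<I(j,P)$, so adjusting $\Paths_{ij}^{\adj}$ with respect to the pair $i,j$ reverses no path and returns $\Paths_{ij}^{\adj}$ itself; that is, $(\Paths_{ij}^{\adj})^{\adj}_{ij}=\Paths_{ij}^{\adj}$, which is exactly the condition for being adjusted. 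With this hypothesis verified, Proposition~\ref{alowable_set} guarantees both that the allowable sequence is unique up to reversal and that it can be constructed within $O(kn^2)$ time, completing the proof.

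I do not anticipate any genuine obstacle, since the corollary is a straightforward synthesis: Lemma~\ref{lemma:allowable} supplies the existence of an allowable ordering, and Proposition~\ref{alowable_set} supplies its efficient construction. The only subtlety worth flagging is the bookkeeping distinction between an allowable \emph{set} (a set admitting some allowable ordering) and an allowable \emph{sequence} (a fixed ordering of paths avoiding the forbidden triple pattern of the definition); the argument hinges on translating the set-level conclusion of the lemma into the sequence-level input required by the algorithm of the proposition.
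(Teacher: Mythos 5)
Your proposal is correct and follows exactly the paper's route: the paper states this corollary as a direct consequence of Lemma~\ref{lemma:allowable} (which gives allowability of the set $\Paths_{ij}^{\adj}$, hence existence of an allowable ordering by definition) combined with Proposition~\ref{alowable_set} (which gives the $O(kn^2)$ construction for an adjusted set). Your explicit check that $\Paths_{ij}^{\adj}$ is adjusted, i.e.\ that $(\Paths_{ij}^{\adj})^{\adj}_{ij}=\Paths_{ij}^{\adj}$, is a correct and careful verification of the proposition's hypothesis that the paper leaves implicit.
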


We observe that Corollary~\ref{cor:set_to_sequence} allows us to restrict considerations to allowable adjusted sequence of paths. Any question which can be resolved in polynomial time for an allowable adjusted sequence of $k$ paths, is also polynomial-time solvable for a set of $k$ paths.  
 
Recall that, if an allowable sequence $\seq{P_1,\dots, P_k}$ of paths admits a %$\{\dir_1,\dots,\dir_k\}$-
monotone simultaneous embedding, where $P_i$ is embedded monotone in direction $\dir_i$ for $1 \leq i \leq k$, then the sequence $\seq{\dir_1,\dots,\dir_k}$ of vectors is ordered around the origin. Thus, for $\seq{P_1,\dots, P_k}$ it makes sense to talk about \emph{$\seq{\dir_1,\dots,\dir_k}$-monotone simultaneous embeddings}, where the vectors appear in this order around the origin. Moreover, a \emph{monotone simultaneous embedding of a sequence of paths} is refined to be a $\seq{\dir_1,\dots,\dir_k}$-monotone simultaneous embedding. 
To emphasize the difference with the previous definition, in monotone simultaneous embeddings of a set of paths, no order was required on the vectors.

\subsection{The Dual Problem}\label{sec_dual}
We will work on the problem of monotone simultaneous embedding in the dual plane by using the standard duality transform where a point $s = (x_s, y_s)$ is mapped to a non-vertical line $l: y = x_sx-y_s$ and vice versa.
See~\cite{edelsbrunner} for properties of the dual transform.
In this section we briefly recall some properties of the transform which are used in this paper;
see \figurename~\ref{fig_duality} for an example.
We denote by $S$ a set of points in the primal (or a set of lines in the dual setting, respectively).
Similarly, in the primal setting, single points are denoted by~$s$, and lines are denoted by~$l$.
As a special case, we denote vertical lines in the dual by~$\vl$. %see Observation~\ref{obs:vec_vert} below.
Notation-wise, we do not distinguish between the primal and the dual setting. For example, $s$~is a point
in the primal and at the same time a line in the dual.

\begin{figure}[htb]
\centering
\includegraphics{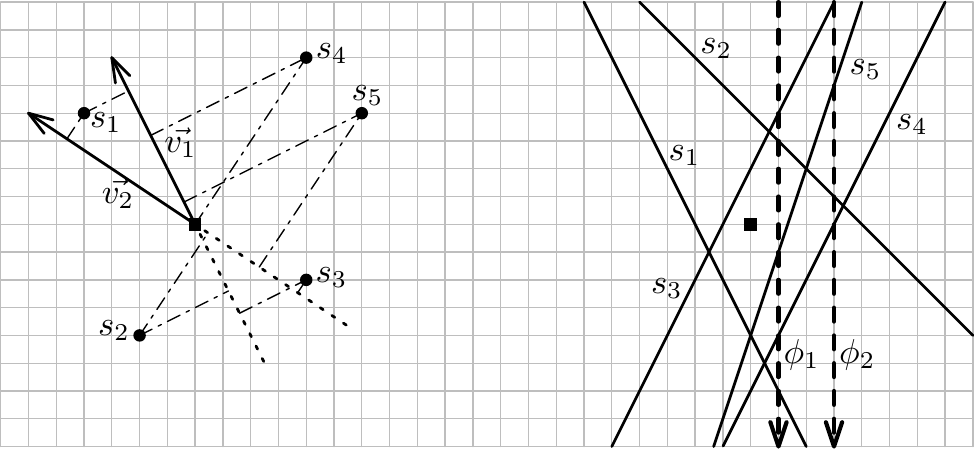}
\caption{A point set (left) and its dual arrangement of lines (right).
The two vectors to the left are represented by the two vertical lines in the dual.
Observe the order of projection of the points on the vectors in the primal and the corresponding order of the intersections with the vertical lines in the dual.}
\label{fig_duality}
\end{figure}

A well-known property of the duality transform is that the primal point~$s$ is below a primal line~$l$ if and only if the dual point $l$ is below the dual line $s$.
Further, the infinite set of points on the vertical line $\vl: x = \alpha$ corresponds to the set of lines with slope~$\alpha$.
Let $S$ be a primal set of $n$ points and let $\pi$ be the order in which the points are traversed by translating a line of slope $\alpha$ below all points in $S$ to a line above all points of $S$.
Then $\pi$ is also the order in which the dual lines of $S$ intersect the vertical line $\vl: x = \alpha$ in negative $y$-direction.
Consider now the lines that are normal to any line $\ell$ of slope $\alpha$, i.e., the ones of slope $(-1/\alpha)$.
The traversal of $S$ by a line of that slope corresponds to the order in which the points of $S$ are projected on $\ell$.
Therefore, this is the order in which the dual lines intersect the vertical line $\vl': x = (-1/\alpha)$. Observe also that, for $\alpha \rightarrow 0$, the order is given by the slope of the lines. So, we have the following.

%moved in the next section 
%Note that for any problem instance with a path $P_i$ and a vector $\dir_i$, there exists a solution if and only if there is a solution if $P_i$ is replaced by its reverse and $\dir_i$ by a vector in the opposite direction. Therefore, we can assume that all vectors point in the positive $y$-direction. Further, we assume that the vectors are sorted by the slope of their normals. We call such a sequence of vectors \emph{adjusted}.

\begin{observation}\label{obs:vec_vert}
	Consider a point set $S$, a directed path~$P$ containing the points of $S$, and a vector $\dir$ of slope $\alpha$. The edges of $P$ can be drawn resulting in a $\dir$-monotone drawing of $P$ if and only if the dual lines of $S$ intersect the vertical line $\vl: x = (-1/\alpha)$ in the same order as they appear either along~$P$ or along its reverse.
\end{observation}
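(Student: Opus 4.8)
The plan is to first prove the statement entirely in the primal, reducing ``$P$ admits a $\dir$-monotone drawing'' to a purely combinatorial condition on the order of the points of $S$, and then to transport that condition to the dual by means of the correspondence between projection orders and intersections with vertical lines that was recorded in the preceding paragraph. So the proof splits cleanly into a \emph{primal characterization} and a \emph{dual translation}, and each half is short.

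For the primal characterization I would show that the edges of $P$ can be drawn so as to produce a $\dir$-monotone drawing if and only if the vertices appear along $P$ (or along its reverse) in the same order as their orthogonal projections onto $\dir$. For the forward direction, if such a drawing exists then every edge is a curve that is strictly increasing in the direction $\dir$; hence the projection onto $\dir$ strictly increases from the tail to the head of each edge and therefore strictly increases along the entire path, so the vertex order along $P$ equals the projection order (and reading $P$ in the opposite sense gives the reverse). For the converse, suppose the vertices listed in path order, $s_{i_1},\dots,s_{i_n}$, coincide with the order of increasing projection onto $\dir$. After rotating the plane so that $\dir$ points in the positive $y$-direction, the projection becomes the $y$-coordinate, and joining consecutive vertices by straight segments yields $y$-monotone, hence $\dir$-monotone, edges. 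These segments are pairwise non-crossing: edge $t$ spans the $y$-interval $[y_{i_t},y_{i_{t+1}}]$, and for any later edge $t'>t$ we have $y_{i_{t'}}\geq y_{i_{t+1}}$, so two non-adjacent edges have disjoint open $y$-intervals while consecutive edges meet only at a shared endpoint. Thus the straight-line drawing is a planar $\dir$-monotone drawing of $P$.

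It then remains to rewrite the projection condition in dual terms, which is exactly what the discussion leading up to the observation supplies: for a vector $\dir$ of slope $\alpha$, the normals to a line of slope $\alpha$ have slope $-1/\alpha$, the traversal of $S$ by a line of that slope realizes the projection order onto $\dir$, and this order is precisely the order in which the dual lines of $S$ cross the vertical line $\vl:x=-1/\alpha$. Substituting this equivalence into the primal characterization immediately gives the claim: $P$ admits a $\dir$-monotone drawing exactly when its vertex order, or the reverse, matches the order of the intersections of the dual lines with $\vl$, the two possibilities reflecting the two orientations in which the common supporting direction of the projection may be read.

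The only genuinely delicate points, and the places where I would spend the care rather than wave hands, are the planarity of the drawing produced in the converse direction (dispatched above by the disjoint-interval argument, so that no appeal to curved edges is needed) and the degenerate slopes: a vertical $\dir$ ($\alpha=\infty$) corresponds to $\vl:x=0$, i.e.\ ordering by $y$-coordinate, whereas a horizontal $\dir$ ($\alpha=0$) corresponds to ordering by the slope of the dual lines, in agreement with the limiting case $\alpha\to0$ noted earlier. Ties in the projection are excluded just as in the definition of the circular sequence, so each projection order is a genuine permutation and the equivalence holds without qualification.
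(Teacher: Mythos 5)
Your proof is correct and takes essentially the same route as the paper: the paper treats the primal equivalence (a $\dir$-monotone drawing exists iff the vertex order along $P$ or its reverse matches the projection order onto $\dir$) as immediate, and the paragraph preceding the observation supplies exactly the dual translation you invoke. Your explicit straight-line construction with the disjoint-interval planarity argument, and your handling of the degenerate slopes $\alpha=0$ and $\alpha=\infty$, simply fill in details the paper leaves implicit.
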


The vector $\dir$ and the vertical line $\vl$ in the previous observation are said to correspond to each other. Recall that in the monotone simultaneous embedding problem we are given a set $\{P_1,\dots,P_k\}$ of paths on the same vertex set $V$ and our goal is to determine the positions of vertices $V$, such that the edges of each $P_i$ can be drawn resulting in its $\dir_i$-monotone drawing for some vector $\dir_i$, $i=1,\dots,k$. Recall also that Corollary~\ref{cor:set_to_sequence} allows us to consider only allowable adjusted sequence of paths. The following observations will be the repeatedly used in the remainder of the paper.

\begin{observation}
\label{obs:dual_problem}
An adjusted allowable sequence $\seq{P_1,\dots,P_k}$ of paths on a common set of vertices admits a monotone simultaneous embedding if and only if there exists a sequence of $k$ ordered vertical lines $\vl_1,\dots,\vl_k$ and a set  $\{s_1,\dots,s_n\}$ of $n$ non-vertical lines that for each $i=1,\dots,k$, the lines $\{s_1,\dots,s_n\}$ intersect the vertical line $\vl_i$ in the same order as the points $\{s_1,\dots,s_n\}$ appear along $P_i$.
\end{observation}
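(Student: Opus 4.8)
The plan is to reduce the statement directly to Observation~\ref{obs:vec_vert}, applied once for each path of the sequence. Since the duality transform is an involution, the two directions of the equivalence are essentially symmetric, and the only work beyond a path-by-path invocation of Observation~\ref{obs:vec_vert} is the bookkeeping for orientations (handled by the adjusted assumption) and for the ordering of the $k$ vertical lines (handled by the allowability of the sequence).

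For the forward direction I would start from a monotone simultaneous embedding, which supplies a point set $S=\{s_1,\dots,s_n\}$ together with directions $\dir_1,\dots,\dir_k$ such that $P_i$ is drawn $\dir_i$-monotone. Applying the duality transform, each point $s_j$ becomes a non-vertical line $s_j$, and each $\dir_i$ of slope $\alpha_i$ corresponds to the vertical line $\vl_i: x=-1/\alpha_i$. Observation~\ref{obs:vec_vert} then immediately yields that the lines $\{s_1,\dots,s_n\}$ meet $\vl_i$ in the order in which the vertices appear along $P_i$ or along its reverse. Because the sequence is adjusted, the orientations of the $\dir_i$ are mutually consistent (Observation~\ref{obs:adjusted}), so the ``reverse'' alternative never occurs and the order is exactly that of $P_i$. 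Finally, since $\seq{P_1,\dots,P_k}$ is allowable, the directions $\dir_1,\dots,\dir_k$ are ordered around the origin, and I would translate this into the required left-to-right order of $\vl_1,\dots,\vl_k$.

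The backward direction runs the same argument through the inverse transform. Given the ordered vertical lines $\vl_i: x=\beta_i$ and lines $\{s_1,\dots,s_n\}$ with the stated intersection property, I would dualize back to a point set $S$ and let $\dir_i$ be the vector of slope $-1/\beta_i$ corresponding to $\vl_i$. Observation~\ref{obs:vec_vert} then certifies that each $P_i$ admits a $\dir_i$-monotone drawing on $S$. Here it is worth remarking that planarity of each individual drawing is automatic: $\dir_i$-monotonicity forces the order of the vertices along $P_i$ to coincide with their order by projection onto $\dir_i$, so the straight-line realization of $P_i$ is already a monotone, hence non-self-intersecting, polygonal path; and since we ask only for a simultaneous embedding (edges may be realized independently in the different $\Gamma_i$), no crossing interaction between distinct paths needs to be checked.

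The step I expect to be the main obstacle is matching the order of the directions $\dir_i$ around the origin with the left-to-right order of the vertical lines $\vl_i$, because the correspondence $\alpha\mapsto -1/\alpha$ is not globally monotone (it is discontinuous at $\alpha=0$). The adjusted assumption is precisely what removes this difficulty: all $\dir_i$ lie in a common half-plane, so over the corresponding half-period the projection line rotates monotonically and its associated vertical line sweeps $x$ monotonically from one side to the other, making the radial order of the $\dir_i$ agree with the $x$-order of the $\vl_i$. I would make this precise by rotating the whole configuration so that the pair of indices kept fixed by the adjustment corresponds to the extreme (vertical) projection, thereby confining all relevant slopes to a single monotone branch of $\alpha\mapsto -1/\alpha$.
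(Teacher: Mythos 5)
Your proposal is correct and takes essentially the approach the paper intends: the paper states this result as an unproved observation that is meant to follow directly from Observation~\ref{obs:vec_vert} together with the preceding discussion of adjusted vectors (Observation~\ref{obs:adjusted}) and of the radial ordering of directions for an allowable sequence, which is exactly the path-by-path dualization argument you give. Your additional care about the two branches of $\alpha \mapsto -1/\alpha$ (resolved by rotating so that the adjusted pair is vertical, putting all directions in the upper half-plane) and about planarity of each monotone straight-line drawing fills in details the paper leaves implicit.
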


For the constrained version of monotone simultaneous embeddings we have the following. 

\begin{observation}
\label{obs:dual_problem_constr}
An adjusted allowable sequence $\seq{P_1,\dots,P_k}$ of paths on a common set of vertices admits a $\seq{\dir_1,\dots,\dir_k}$-monotone simultaneous embedding if and only if there exists a set of $n$ non-vertical lines $\{s_1,\dots,s_n\}$ that for each $i=1,\dots,k$, intersect the vertical line $\vl_i$ in the same order as $\{s_1,\dots,s_n\}$ appear along $P_i$, where $\vl_i$ is the vertical line in the dual corresponding to the vector $\dir_i$, $i=1,\dots,k$.
\end{observation}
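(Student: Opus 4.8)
The plan is to derive this statement directly from Observation~\ref{obs:vec_vert}, applied separately to each path--vector pair $(P_i,\dir_i)$ but coupled through a single common set of dual lines. Recall that a $\langle\dir_1,\dots,\dir_k\rangle$-monotone simultaneous embedding places all vertices at one point set $S$ in the primal, so that \emph{the same} $S$ realizes every $P_i$ monotonically. Passing to the dual, this common point set becomes a common line arrangement $\{s_1,\dots,s_n\}$, while the vertical line $\phi_i$ corresponding to $\dir_i$ is fixed by the slope $\alpha_i$ of $\dir_i$ via $\phi_i\colon x=-1/\alpha_i$. Thus both sides of the claimed equivalence describe exactly the same object -- one set of $n$ non-vertical lines read off against the $k$ prescribed vertical lines -- and what remains is to verify the two implications. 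Note that the only formal difference from Observation~\ref{obs:dual_problem} is that here the lines $\phi_1,\dots,\phi_k$ are prescribed by the given vectors rather than chosen freely, so the proof structure will mirror that of Observation~\ref{obs:dual_problem}.

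For the forward direction I would start from such an embedding, take $S$ to be the shared point set, and let $\{s_1,\dots,s_n\}$ be its dual lines. Since $\Gamma_i$ is $\dir_i$-monotone, Observation~\ref{obs:vec_vert} guarantees that the lines $\{s_1,\dots,s_n\}$ cross $\phi_i$ in the order in which the vertices appear along $P_i$ or along its reverse, and it remains to argue that it is $P_i$ and not its reverse that is realized. This is settled by noting that $\dir_i$ is a \emph{directed} vector: a $\dir_i$-monotone drawing makes the vertices increase along $P_i$ in the direction of $\dir_i$, and this direction fixes in which of the two $y$-directions the crossings along $\phi_i$ are to be read. Because the sequence is adjusted (all vectors in a common half-plane, Observation~\ref{obs:adjusted}) and the $\langle\dir_1,\dots,\dir_k\rangle$ appear in order around the origin -- corresponding in the dual to the left-to-right order of $\phi_1,\dots,\phi_k$ -- these reading conventions are mutually consistent, so no reversal is introduced and the required line set exists.

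For the converse I would dualize the given lines $\{s_1,\dots,s_n\}$ back to a point set $S$ and invoke Observation~\ref{obs:vec_vert} once per index: the prescribed crossing order on $\phi_i$ certifies that $P_i$ admits a $\dir_i$-monotone drawing on $S$. Carrying this out simultaneously for all $i$ reuses the \emph{same} $S$, which is exactly condition~(2) of a monotone simultaneous embedding. It then remains to observe planarity of each individual drawing: since the vertices appear along $P_i$ in precisely their order of projection onto $\dir_i$, consecutive edges of $P_i$ can be routed inside disjoint slabs bounded by the projection values of consecutive vertices, so a crossing-free $\dir_i$-monotone drawing always exists.

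I expect the only genuinely delicate point to be the bookkeeping of orientations: reconciling the ``$P$ or its reverse'' ambiguity of Observation~\ref{obs:vec_vert} with the prescribed, directed vectors $\dir_i$ and with the order of $\langle\dir_1,\dots,\dir_k\rangle$ around the origin. Since the sequence is adjusted, all these orientations can be pinned down consistently, after which the remaining steps are routine applications of the duality transform.
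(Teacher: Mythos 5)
Your proposal is correct and matches the paper's (implicit) reasoning: the paper states this as an Observation with no separate proof, precisely because it follows, as you argue, from applying Observation~\ref{obs:vec_vert} to each pair $(P_i,\dir_i)$ while coupling all $k$ applications through one common point set and its single dual line arrangement. Your explicit handling of the ``$P_i$ versus its reverse'' orientation issue via adjustedness, and of planarity via the slab argument, fills in details the paper leaves to the reader but does not change the approach.
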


%The following is not necessary due to the new obsevatoin
%Even though the vertical line $v$ is not the exact dual of the vector $\dir$, they correspond to each other. Hence, for a given sequence of vectors and paths, constructing a set of points that allows a monotone simultaneous embedding of the paths for the vector set is equivalent to finding a set of $n$ lines (corresponding to the points) that intersect a set of vertical lines (corresponding to the vectors) in a predefined order (given by the paths).

\section{Monotone Simultaneous Embeddings in the Dual}\label{sec_easy}

\subsection{Two and Three Paths}\label{sec_three}
It is well known that given any sequence $\seq{P_1, P_2}$ of two paths and two vectors $\seq{\dir_1,\dir_2}$, there always exists a $\seq{\dir_1,\dir_2}$-monotone simultaneous embedding of $\seq{P_1, P_2}$; see for example~\cite{gdhandbook13, Brass2007117, book_embeddings}.
%\todo{old refs were~\cite{Brass2007117, book_embeddings}. which ones do we really want here??}
To give some intuition for the more complex cases of three or more paths, we present the following approach which utilizes the dual setting.

Let $\vl_1$ and $\vl_2$ be the two vertical lines along which the order for the paths is defined in the dual.
Let the dual lines be labeled in increasing order of appearance in $P_1$, and let $I(i,P_2)$ be the index of $s_i$ in $P_2$ (the function $I$ gives us the order $\pi$ used before in the form of indices).
Let the line $s_i$ pass through the point $(x_{\vl_1}, -i)$ and $(x_{\vl_2}, -I(i, P_2))$. 
This gives us a primal point set allowing a $\seq{\dir_1,\dir_2}$-monotone simultaneous embedding of $\seq{P_1, P_2}$.

In contrast to two paths, it is not always possible to find a monotone simultaneous embedding of three paths, even if there are no constraints on the directions of monotonicity. This also shows that the necessary condition of Lemma~\ref{lemma:allowable} is not sufficient. 
The proof of the following theorem uses a classical result by Ringel~\cite{ringel} on the stretchability of pseudo-line arrangements based on Pappus' Theorem.
%We also show that, for three paths, the exact directions of the vectors of monotonicity do not matter; only their relative position is relevant. 

\begin{theorem} {\bf (\cite[Proposition 8]{Asinowski2008})}
There are allowable path sets of three or more paths that do not admit a monotone simultaneous embedding. 
\end{theorem}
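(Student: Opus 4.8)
The plan is to work entirely in the dual, where by Observation~\ref{obs:dual_problem} a monotone simultaneous embedding of an adjusted allowable sequence $\seq{P_1,\dots,P_k}$ exists precisely when there is an arrangement of $n$ non-vertical \emph{straight} lines whose crossing order along $k$ ordered vertical lines $\vl_1,\dots,\vl_k$ matches the orders prescribed by the $P_i$. As the sweeping vertical line moves from left to right, the crossing order of a straight-line arrangement runs through the permutations of a \emph{circular} (equivalently, stretchable allowable) sequence. Thus the $k$ paths are nothing but $k$ prescribed ``snapshots'' that must occur, in the given cyclic order, inside the circular sequence $\CS(S)$ of some point set $S$; and a set of paths fails to embed exactly when no \emph{stretchable} arrangement exhibits these snapshots, even though a \emph{pseudoline} arrangement (an abstract allowable sequence) does.

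First I would recall Ringel's classical non-stretchable configuration built from Pappus' Theorem~\cite{ringel}: there is a simple arrangement of nine pseudolines whose combinatorics violate the incidence that Pappus' Theorem forces on straight lines, and which therefore admits no straight-line realization. Read as a sequence of permutations of the nine lines under a left-to-right sweep, this yields a concrete allowable sequence $\Sigma$ on nine elements that is not the circular sequence of any point set.

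Then I would extract the paths. I would choose three sweep positions $x_1 < x_2 < x_3$ and let $P_1, P_2, P_3$ be the permutations that $\Sigma$ displays there (reorienting each so that the triple is adjusted). Two things then have to be checked. Allowability of $\seq{P_1,P_2,P_3}$ is immediate, since any three permutations taken in order from an allowable sequence satisfy the triple condition in the definition of an allowable sequence (they form a suballowable sequence in the sense of Asinowski~\cite{Asinowski2008}); hence the resulting set is an allowable path set. Non-embeddability is the substance: I would argue that $x_1, x_2, x_3$ can be placed so that the crossing orders of the nine lines at these three directions already determine the relative order of every triple of lines involved in the Pappus incidence, i.e.\ pin down the orientation of the corresponding point triples. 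Any straight-line arrangement realizing the three snapshots would then reproduce exactly the combinatorial type of Ringel's configuration and hence violate Pappus' Theorem, a contradiction. Therefore no point set $S$ has $P_1,P_2,P_3$ among the snapshots of $\CS(S)$, and by Observation~\ref{obs:dual_problem} the three paths admit no monotone simultaneous embedding. For $k > 3$ one simply inserts further redundant snapshots of $\Sigma$.

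The hard part will be the last verification: showing that a mere three snapshots suffice to force the non-Pappus combinatorics. The orientation of a triple of points flips only as the sweep direction crosses the direction parallel to one of its three connecting segments, so between two consecutive snapshots the only information gained is \emph{which} pairs of lines have crossed. I must therefore place $x_1,x_2,x_3$ so that the pattern of crossings occurring in $(x_1,x_2)$ and in $(x_2,x_3)$ separates every Pappus-relevant pair at a distinct moment, thereby fixing all orientations needed to invoke Pappus' Theorem. Confirming that such a placement exists for Ringel's nine-line arrangement, and that it keeps $\seq{P_1,P_2,P_3}$ adjusted and allowable, is the crux of the argument.
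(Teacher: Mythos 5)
Your framework is sound, and it follows exactly the route the paper points to (the paper itself gives no proof, only the citation to Asinowski's Proposition~8 together with the hint that it rests on Ringel's non-Pappus arrangement): pass to the dual, take the allowable sequence swept from the simple non-Pappus arrangement of nine pseudolines, extract three snapshots, and argue that the orientations these snapshots force on point triples already contradict Pappus' theorem, so no straight-line realization can exhibit all three snapshots. Your description of the forcing mechanism is also the correct one: for a triple appearing as $i,j,k$ in the earlier snapshot, the swap of the outer pair $(i,k)$ must occur between the swaps of $(i,j)$ and $(j,k)$, so whenever the three swaps of a triple are not all confined to the same interval between consecutive snapshots, the orientation of that triple is pinned down; triples whose three swaps fall inside one interval remain completely free.

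But this is precisely where your proposal stops being a proof. The decisive step --- exhibiting concrete sweep positions $x_1<x_2<x_3$ (equivalently, writing down three concrete permutations of the nine elements) and verifying that the triples whose orientations are thereby forced suffice, \emph{by themselves}, to rule out realizability via Pappus --- is something you say you ``would argue'' and then explicitly flag as unconfirmed. This is not a routine check that can be waved through: with three snapshots there are only two intervals, every triple whose swaps all land in a single interval is unconstrained, and a hypothetical realizing point set may choose those free orientations at will; a priori it could happen that no placement of three sweep lines separates enough of the Pappus-relevant swaps simultaneously. That such a placement exists is the entire content of Asinowski's Proposition~8, and it is also why your intermediate claim that a realization of the three snapshots ``would reproduce exactly the combinatorial type of Ringel's configuration'' is an overstatement --- only the forced sub-pattern of orientations is reproduced, and the argument must be closed using that partial information alone. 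Until the three permutations are written down and this verification is carried out, everything genuinely specific to the statement remains unproved; the surrounding reductions (duality, allowability of snapshots, padding to $k>3$ by adding further snapshots) are correct but are the easy part.
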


On the positive side, a monotone simultaneous embedding for three paths does not strongly depend on the choice of the vectors: 

\begin{theorem}\label{thm_3_always}
	Suppose that a path set $\Paths=\{P_1,P_2,P_3\}$ admits a $\{\dir_1,\dir_2,\dir_3\}$-monotone simultaneous embedding. Then  $\Paths$ admits a $\{\dir'_1,\dir'_2,\dir'_3\}$-monotone simultaneous embedding, for any vectors $\dir'_1,\dir'_2,\dir'_3$, provided that their radial order around the origin is the same as the one of $\dir_1,\dir_2,\dir_3$. 
%Suppose there exists a sequence of three adjusted vectors ordered radially around the origin, such that there is a monotone simultaneous embedding of the corresponding three paths.
%Then there exists a monotone simultaneous embedding of these paths for any sequence of three adjusted vectors that have the same order around the origin.
\end{theorem}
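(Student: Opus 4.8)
The plan is to work entirely in the dual setting of Section~\ref{sec_dual}. First I would apply Corollary~\ref{cor:set_to_sequence} to replace the set $\Paths=\{P_1,P_2,P_3\}$ by an adjusted allowable sequence $\seq{P_1,P_2,P_3}$, and I would observe that, since $\dir'_1,\dir'_2,\dir'_3$ are required to have the same radial order around the origin as $\dir_1,\dir_2,\dir_3$, the corresponding vertical lines occur in the same left-to-right order in both cases (recall that the vector of slope $\alpha$ corresponds to the vertical line $\vl:x=-1/\alpha$, as in Observation~\ref{obs:vec_vert}). By Observation~\ref{obs:dual_problem_constr}, the hypothesis then says that there exist $n$ non-vertical lines $s_1,\dots,s_n$ meeting three vertical lines $\vl_1,\vl_2,\vl_3$, at abscissae $x_1<x_2<x_3$, in the orders prescribed by $P_1,P_2,P_3$; the goal is to produce such lines for three vertical lines placed at the (arbitrary) abscissae $x'_1<x'_2<x'_3$ associated with $\dir'_1,\dir'_2,\dir'_3$.

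The key idea is to transport the given line arrangement by a projective transformation $\Phi$ of the plane that fixes the vertical direction, i.e., fixes the common point at infinity $[0:1:0]$ of all vertical lines. Such a $\Phi$ maps vertical lines to vertical lines and (since it fixes that point at infinity) non-vertical lines to non-vertical lines; on the abscissa it acts as a M\"obius transformation $\mu(x)=(ax+c)/(gx+j)$, while on each individual vertical line it acts as an affine map of the $y$-coordinate. Because M\"obius transformations are sharply $3$-transitive on $\mathbb{R}\cup\{\infty\}$, I can choose $\Phi$ so that $\mu(x_i)=x'_i$ for $i=1,2,3$. Setting $s'_i:=\Phi(s_i)$ and $\vl'_i:=\Phi(\vl_i)$ yields the candidate solution: $\Phi$ preserves incidences, so $s'_i$ meets $\vl'_j$ exactly at the image of $s_i\cap\vl_j$.

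It then remains to verify that $\Phi$ preserves the three reading orders, and this is where the cyclic structure is used. Since $\mu$ carries the increasing triple $(x_1,x_2,x_3)$ to the increasing triple $(x'_1,x'_2,x'_3)$, it preserves the cyclic orientation of $\mathbb{R}\cup\{\infty\}$; hence its pole $\mu^{-1}(\infty)$ lies in the complementary arc and, in particular, outside $[x_1,x_3]$. Therefore the denominator $gx+j$ has constant sign on $[x_1,x_3]$, which contains $x_1,x_2,x_3$, so the induced map on each of the three vertical lines has the same monotone behaviour; choosing the sign of the vertical-scaling factor of $\Phi$ makes this behaviour order-preserving on all three. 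Consequently the order in which $s'_1,\dots,s'_n$ cross $\vl'_i$ equals the order in which $s_1,\dots,s_n$ cross $\vl_i$, namely the order given by $P_i$, for $i=1,2,3$. Translating back through the duality produces a point set realising a $\{\dir'_1,\dir'_2,\dir'_3\}$-monotone simultaneous embedding, which is what we want.

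The main obstacle is finding the right transformation group. Affine maps preserving verticality act only $2$-transitively on the abscissae (they preserve the ratio $(x_2-x_1):(x_3-x_2)$) and so cannot move three positions independently; the pencil-preserving projective maps supply precisely the missing third degree of freedom. This is also exactly why the statement is special to three paths: for $k\ge 4$ one would have to prescribe $k$ abscissae, which a single M\"obius transformation cannot achieve, consistent with the claimed failure of the result for four or more paths. The only delicate technical points are the placement of the pole of $\mu$ and the consistency of the orientation across the three vertical lines, both of which are settled by the cyclic-order argument above.
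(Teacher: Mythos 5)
Your proposal is correct and is essentially the paper's own argument: both work in the dual arrangement and apply a projective transformation that fixes the vertical direction (equivalently, preserves the pencil of vertical lines together with the line at infinity), exploiting its three degrees of freedom on the abscissae to relocate the three vertical lines $\vl_1,\vl_2,\vl_3$ while keeping all crossing orders. The paper parametrizes this map by sending $\vl_1$ to the line at infinity via the sphere model (normalizing $\dir'_1$ to slope $0$) and finishing with scaling and translation, whereas you invoke the sharp $3$-transitivity of the M\"obius action on $\mathbb{R}\cup\{\infty\}$ directly and, more carefully than the paper, verify order-preservation on all three vertical lines via the location of the pole --- but it is the same transformation group and the same key idea either way.
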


\begin{proof}  
By Corollary~\ref{cor:set_to_sequence}, we can assume that sequence $\seq{P_1,P_2,P_3}$ is allowable and adjusted and hence, also the sequence $\{\dir_1,\dir_2,\dir_3\}$ is adjusted.
Further, we can assume without loss of generality that $\dir'_1$ is a vector with slope $0$.
By Observation~\ref{obs:dual_problem}, the dual vertical lines $\vl_1,\vl_2,\vl_3$, appear in this order from left to right. 
Further, the dual lines $s_1,\dots,s_n$ that correspond to the vertices, cross each $\vl_i$ in the order as they appear on~$P_i$. 
Recall the sphere model of the projective plane.
Consider a plane and a sphere in $E^3$ such that the center point is not in the plane.
Every line that intersects the plane in one point and passes through the center of the sphere intersects the sphere in two antipodal points, and every line through the center of the sphere intersects the plane, except if the line is parallel to the plane.
The union of the points on the sphere defined in this way by lines parallel to the plane represents the line at infinity~$\ell_\infty$.
Hence, we are given a bijective mapping from every point in the plane to two antipodal points on the sphere (not on~$\ell_\infty$).
In this mapping, a line in the plane corresponds to a great circle on the sphere.
Consider now the vertical lines  $\vl_1,\vl_2,\vl_3$ that are crossed by the non-vertical lines $s_1,\dots,s_n$.
If we apply the projective transformation that corresponds to rotating the sphere such that the great circle corresponding to $\vl_1$ equals~$\ell_\infty$, we obtain different lines $\vl'_2,\vl'_3,s_1',\dots,s_n'$ in the plane, corresponding to $\vl_2,\vl_3,s_1,\dots,s_n$, respectively. The order of intersections of the transformed lines $s_1',\dots,s_n'$ on $\vl'_i$, $i \in \{2,3\}$, is identical to the order of intersections of the original lines $s_1,\dots,s_n$ on $\vl_i$. Furthermore, the order of the slopes of the transformed lines $s_1',\dots,s_n'$ is identical to the order of intersections of $s_1,\dots,s_n$ with~$\vl_1$. Equivalently, $\vl'_1$ is the vertical line $\vl'_1: x = -\infty$ and hence, $\vl'_1$ corresponds to a vector $\dir'_1$ with slope $0$ in the primal.
Now we can scale and translate the transformed lines $\vl'_2,\vl'_3,s_1',\dots,s_n'$ such that we have the vectors $\dir'_2$ and $\dir'_3$ in any position we want.
\end{proof}

In contrast to the statement of Theorem~\ref{thm_3_always}, with four paths, we can first encounter the situation where the actual slopes and not just the relative radial order of the vectors influence the existence of a monotone simultaneous embedding. In other words, the statement of Theorem~\ref{thm_3_always} is tight with respect to the number of paths; see Proposition~\ref{prop:four_paths} 
in Section~\ref{sec_lp}. 

Moreover, even if a monotone simultaneous embedding of three paths exists, an exponential (in the number of vertices) ratio of the smallest and largest distance between vertices of the embedding might be unavoidable.

\begin{proposition}\label{prop_three_expo}
There exists a set of three paths such that every monotone simultaneous embedding needs a grid whose size is exponential in the number of vertices of the paths.
\end{proposition}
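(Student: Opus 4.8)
The plan is to work in the dual setting of Observation~\ref{obs:dual_problem}: a monotone simultaneous embedding of the (adjusted, allowable) sequence $\seq{P_1,P_2,P_3}$ corresponds to an arrangement of $n$ non-vertical lines $s_1,\dots,s_n$ together with three vertical lines $\vl_1,\vl_2,\vl_3$ such that the $s_i$ cross each $\vl_t$ in the order prescribed by $P_t$. Translating ``exponential grid'' into this language, it suffices to exhibit three paths for which, in \emph{every} such arrangement, the defining parameters of the lines (their slopes and intercepts) must span an exponential range; equivalently, that some affine-invariant ratio of two features of the primal point set -- the ratio of two triangle areas, or, after fixing the directions, the ratio of two gaps measured along one of the three directions -- is forced to be $2^{\Omega(n)}$. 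This reduction is what makes the bound meaningful on an integer grid: on an $m\times m$ grid every nonzero triangle area lies in a window of polynomial width (it is a half-integer between $\tfrac12$ and $O(m^2)$), so a forced area ratio of $2^{\Omega(n)}$ immediately yields $m = 2^{\Omega(n)}$.

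Next I would give an explicit recursive construction of the three paths as a \emph{doubling ladder}. The idea is to build the three permutations in $\Theta(n)$ stages, each stage introducing a constant number of new vertices whose prescribed ranks in the three paths pin the position of the corresponding dual line, relative to the two lines added in the previous two stages, to lie ``beyond their sum''. Read off along the appropriate direction, this yields a three-term recurrence of the form $d_{i+1}\ge d_i+d_{i-1}$ (or, in a slightly simpler variant, $d_{i+1}\ge 2\,d_i$) on the successive gaps $d_i$. Such a recurrence is exactly the mechanism that turns small order constraints into exponential growth: iterating it over the $\Theta(n)$ stages forces $d_{\Theta(n)}/d_1\ge F_{\Theta(n)} = 2^{\Omega(n)}$, where $F$ denotes the Fibonacci numbers.

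The crux -- and the step I expect to be the main obstacle -- is the local \emph{doubling lemma}: one must show that the three prescribed crossing orders alone force the recurrence, \emph{uniformly over all admissible arrangements}. By Theorem~\ref{thm_3_always} the existence of an embedding is insensitive to the actual directions, so the three vertical lines may sit at any positions respecting their fixed left-to-right (cyclic) order; the difficulty is to verify that no such placement can relax the ``beyond the sum'' constraint. The natural tool is that the cyclic order of $\vl_1,\vl_2,\vl_3$ is fixed, so the betweenness relations among the crossing points of any two ladder lines persist for every admissible placement; one then converts these betweenness relations, via an intercept computation on the three verticals (Observation~\ref{obs:vec_vert}), into the stated inequality on consecutive gaps with a constant $>1$ that does not depend on the vertical positions. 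Making this constant uniformly bounded away from $1$, and ensuring that the gadgets of different stages do not interfere -- so that the per-stage inequalities compose into a single global recurrence -- are the two technical points that require care.

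Finally I would assemble the pieces: composing the $\Theta(n)$ per-stage inequalities gives an affine-invariant ratio of $2^{\Omega(n)}$ in any realizing configuration, whence every monotone simultaneous embedding of the constructed three paths needs a grid of side length $2^{\Omega(n)}$, i.e.\ exponential in the number of vertices. To confirm that the statement concerns the \emph{necessary} grid size rather than non-existence, I would also exhibit one concrete embedding meeting the bound: placing the dual lines so that the relevant gaps form a geometric progression $1,2,4,\dots$ realizes all three prescribed orders, showing that the constructed sequence is allowable and that the exponential bound is tight.
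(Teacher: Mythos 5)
Your overall strategy coincides with the paper's own: work in the dual setting of Observation~\ref{obs:dual_problem}, build three permutations in stages so that each stage forces the slope of a new dual line to be at least the sum of two earlier slopes, and conclude that some slope must be $2^{\Omega(n)}$, which then translates into an exponential lower bound on the grid size. The problem is that your write-up defers exactly the two steps that constitute the proof. You never exhibit the three paths, and you never prove the ``local doubling lemma''; instead you name both as the expected main obstacles and list the tools you would hope to use against them. Whether a constant-size gadget with the claimed forcing property exists at all is the entire question: an interleaving pattern chosen carelessly either fails to force the sum inequality (the arrangement can absorb the order constraints by adjusting intercepts on the verticals, yielding only additive growth) or is not realizable at all, and nothing in the proposal rules either failure out. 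As it stands, this is a credible proof plan, not a proof.

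For the record, the paper resolves both points concretely, on $n=3m+2$ lines: $P_1$ is the slope order $\seq{s_1,s_2,s_3,\dots}$, $P_2=\seq{s_1}\circ\seq{s_3,s_6,s_9,\dots}\circ\seq{s_2}\circ\seq{s_4,s_5,s_7,s_8,\dots}$, and $P_3=\seq{\dots,s_9,s_8,s_6,s_5,s_3}\circ\seq{s_2}\circ\seq{\dots,s_{10},s_7,s_4}\circ\seq{s_1}$. After normalizing (unit distance between $\vl_2$ and $\vl_3$, $s_1$ of slope $0$, and at least unit distance between consecutive intersections on a vertical line, which is the grid hypothesis), the prescribed orders force $\slope(s_2)\geq 2m+2$, and the requirements that $s_5$ cross $\vl_2$ below $s_4$ but cross $\vl_3$ above $s_3$ force $\slope(s_5)\geq\slope(s_3)+\slope(s_4)$; iterating gives $\slope(s_{3i+2})\geq 2\,\slope(s_{3i-1})+3-2(i-1)$ and hence $\slope(s_n)=\Omega(n2^{n/3})$. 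Note also that the paper does not need your uniformity-over-placements analysis, which you correctly flag as delicate: fixing the horizontal positions of the vertical lines and the slope of $s_1$ is without loss of generality, because horizontal scaling and shearing of the dual plane preserve all crossing orders while the quantity being bounded is a ratio. If you want to complete your version, the missing content is precisely such an explicit interleaving together with the per-stage verification of the sum inequality and the (immediate, once the orders are written down) check that the stages do not interfere.
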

\begin{proof}
Consider the example shown in \figurename~\ref{fig_three-expo}.
Let the number of dual lines be any natural number $n=3m+2$, $m>0$.
The three specified paths are 
\begin{eqnarray*}
	&& P_1  =  \seq{s_1, s_2, s_3, s_4, s_5, s_6, s_7, s_8, s_9, s_{10}, s_{11}, \ldots}, \\
	&& P_2  =  \seq{s_1} \circ \seq{s_3, s_6, s_9 \ldots} \circ \seq{s_2} \circ \seq{s_4, s_5, s_7, s_8, s_{10}, s_{11}, \ldots}, \quad \mbox{ and} \\
	&& P_3  =  \seq{\ldots, s_{11}, s_9, s_8, s_6, s_5, s_3} \circ \seq{s_2} \circ \seq{\ldots, s_{10}, s_7, s_4} \circ \seq{s_1}.\footnotemark%
\end{eqnarray*}
\footnotetext{Here we use $\circ$ as the concatenation operator, e.g., $\seq{s_i}\circ\seq{s_j}=\seq{s_i,s_j}$.}
The lines $s_1,\dots, s_n$ cross the vertical lines $\vl_1,\vl_2,\vl_3$ in the order indicated by the paths $P_1, P_2,P_3$, respectively.  In the figure only $\vl_2$ and $\vl_3$ are shown.
W.l.o.g., we assume unit distance between $\vl_2$ and $\vl_3$ and that $s_1$ has slope $0$.
\begin{figure}[htb]
\centering
\includegraphics{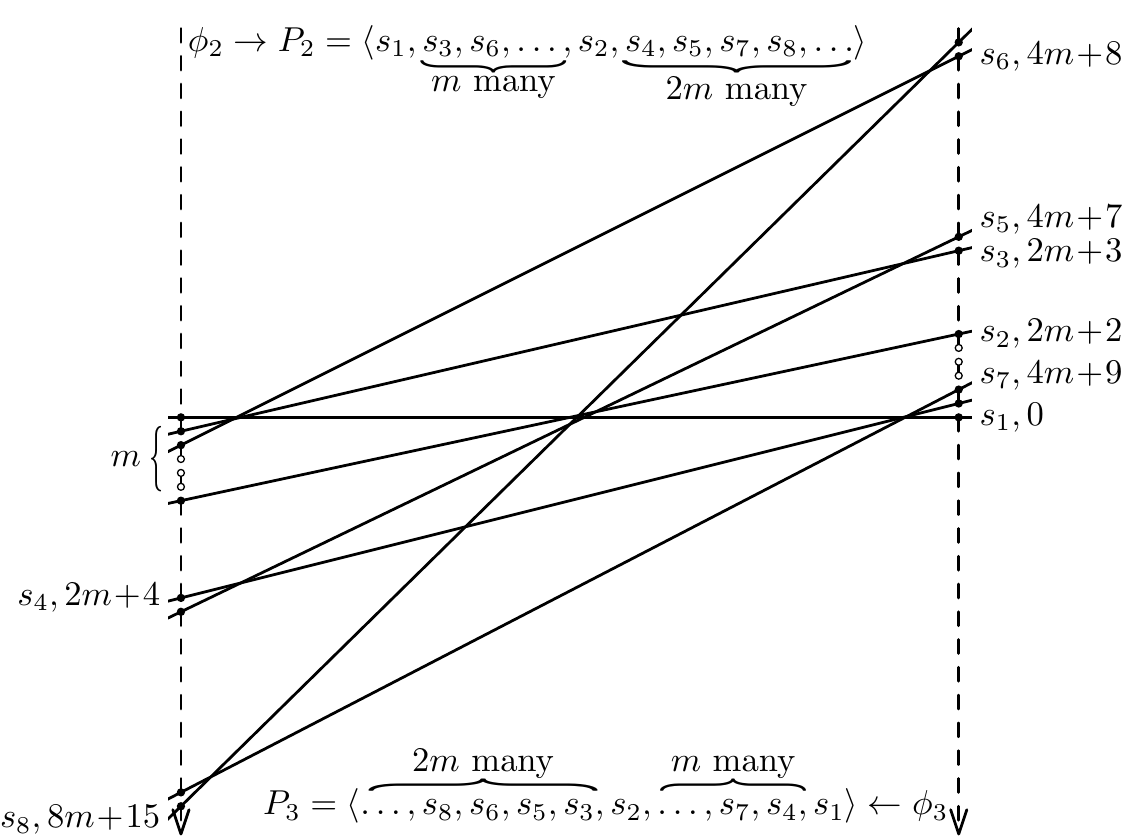}
\caption{Example for three paths where any monotone simultaneous embedding requires a grid of exponential size. 
	The dual lines are labeled with their index (e.g., $s_2$) and their slope (e.g., $2m+2$).}
\label{fig_three-expo}
\end{figure}

I order to satisfy the order of $s_1,\dots,s_n$ on $\vl_1$ (at the ``far left'' of the example), the order of the slopes $\slope(s_i)$ of the lines $s_i$ has to be $\slope(s_i)<\slope(s_j)$, for $i<j$.
Because of the given order on $\vl_2$ there are $m$ lines intersecting $\vl_2$ between the intersections of $s_1$ and $s_2$ with $\vl_2$.
The same is true on $\vl_3$.
Assuming at least unit distance between two intersections on a vertical line, $\slope(s_2)$ has to be at least $2m+2$.
As $\slope(s_2)$ has to be strictly smaller than $\slope(s_3)$, $\slope(s_3)$ is at least $2m+3$ and analogously $\slope(s_4)$ is at least $2m+4$.
The line $s_5$ has to intersect $\vl_2$ below the intersection of $s_4$, and the intersection of $s_5$ with $\vl_3$ has to be above the intersection with $s_3$.
As a result, $\slope(s_5)\geq \slope(s_3)+\slope(s_4)$, i.e., the slope of $s_5$ is at least $4m+7$.
In general, $\slope(s_{3i})\geq \slope(s_{3i-1})+1$, $\slope(s_{3i+1})\geq \slope(s_{3i-1})+2$, and $\slope(s_{3i+2})\geq 2\!\cdot\!\slope(s_{3i-1})+3-2(i-1)$, for $1\leq i\leq m$.
Therefore, we get for the biggest slope on $n=3m+2$ lines, $\slope(s_n)\geq 2^m(2m+3)+2m-1=\Omega(n2^{\frac{n}{3}})$.
This proves that the ratio between minimal distance (unit distance) and maximal distance in the constructed point set is exponential in $n$. 
\end{proof}

\subsection{Fixed Vectors}\label{sec_lp}

In this section we restrict considerations to the setting where we are not only given a set of paths, but also have predefined directions of monotonicity.  

\begin{theorem}\label{thm:poly}
Given a set ${\dir_1, \ldots, \dir_k}$ of vectors and a set $\Paths=\{P_1,\dots, P_k\}$ of paths on the same set of $n$ vertices,
it can be decided in polynomial time if a $\{\dir_1, \ldots, \dir_k\}$-monotone simultaneous embedding of $\Paths$ exists. If such an embedding exists, it can also be constructed in polynomial time.
\end{theorem}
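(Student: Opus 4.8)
The plan is to recast the problem in the dual setting of Section~\ref{sec_dual} and reduce it to a linear feasibility problem. By Observation~\ref{obs:vec_vert}, applied to each path $P_i$ together with its prescribed vector $\dir_i$, a $\dir_i$-monotone drawing of $P_i$ on a point set $S$ exists if and only if the dual lines of $S$ cross the vertical line $\vl_i$ corresponding to $\dir_i$ in the order in which the vertices appear along $P_i$; which of ``$P_i$ or its reverse'' is forced is determined by the sign of the slope of $\dir_i$. A $\{\dir_1,\dots,\dir_k\}$-monotone simultaneous embedding is then exactly a single set of $n$ non-vertical lines $s_1,\dots,s_n$, say $s_j\colon y = a_j x + b_j$, that realises all $k$ of these orders at once, since a common dual line set corresponds to a common point set, i.e.\ to shared vertex positions. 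I would treat the $2n$ numbers $a_1,b_1,\dots,a_n,b_n$ as the unknowns. As a harmless preprocessing step I would first rotate the whole configuration (all vectors together with the yet-unknown point set); a monotone simultaneous embedding exists for $\{\dir_1,\dots,\dir_k\}$ iff it exists for the rotated vectors, and after rotating so that no $\dir_i$ is horizontal, each $\vl_i$ becomes a genuine vertical line $x = c_i$ with $c_i = -1/\slope(\dir_i)$ finite and computable.

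The observation that makes this linear is that the height at which $s_j$ meets $\vl_i$ equals $a_j c_i + b_j$, which is linear in the unknowns. Hence, for every pair of vertices $u,w$ that are consecutive along $P_i$, the requirement that $s_u$ and $s_w$ cross $\vl_i$ in the prescribed relative order is a single strict linear inequality
\[
 a_u c_i + b_u \;<\; a_w c_i + b_w
\]
(or its reverse, with the direction fixed as above). By transitivity of the order it suffices to impose these inequalities for consecutive pairs, giving a system of $k(n-1)$ strict linear inequalities in the $2n$ variables.

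I would then decide feasibility of this system in polynomial time using any polynomial linear-programming method (ellipsoid or interior-point). To handle the strictness I would introduce one extra variable $\varepsilon$, replace each inequality $L<R$ by $L+\varepsilon\le R$, and maximise $\varepsilon$ subject to $\varepsilon\le 1$: the strict system is satisfiable exactly when the optimum is positive, and an optimal solution yields concrete values $a_j,b_j$. Dualising the resulting lines back to primal points and drawing every $P_i$ with straight segments in vertex order then gives the embedding. That this is a valid \emph{monotone} simultaneous embedding again follows from Observation~\ref{obs:vec_vert}: the $\dir_i$-projection order of the points equals the vertex order along $P_i$, so each $P_i$ is drawn $\dir_i$-monotone; and a strictly monotone polyline is automatically plane, since after rotating $\dir_i$ to the horizontal its consecutive edges occupy pairwise interior-disjoint vertical strips.

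The real content is therefore the dual reformulation together with the realisation that all constraints are linear; the linear program and its polynomial-time solvability are then routine. The only points requiring care, and what I expect to be the sole obstacles, are fixing the correct sign of each inequality from the orientation of $\dir_i$ and the sign of its slope, and the degenerate directions. A horizontal $\dir_i$ is removed by the initial rotation. Two parallel vectors yielding the same $c_i$ need no special treatment: their constraints are simply aggregated at the common vertical line, and if the two prescribed orders are incompatible the linear program correctly reports infeasibility.
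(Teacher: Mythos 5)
Your proposal is correct and takes essentially the same approach as the paper: dualize the vertices to lines and each $\dir_i$ to a vertical line, encode the prescribed intersection orders as strict linear inequalities on consecutive pairs along each path, and decide feasibility by polynomial-time linear programming. The only (inessential) difference is the parametrization: you use the $2n$ line coefficients $(a_j,b_j)$, making collinearity automatic, whereas the paper uses the $kn$ intersection heights $y_{i,j}$ together with explicit collinearity constraints relating them across the vertical lines.
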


\begin{proof}
By Corollary~\ref{cor:set_to_sequence}, if $\Paths$ admits a monotone simultaneous embedding, an allowed adjusted ordering $\seq{\Paths}$ of $\Paths$ exists and can be computed in polynomial time. 
W.l.o.g.\ we can assume that $\seq{\dir_1,\dots, \dir_k}$ is the ordering of adjusted vectors corresponding to $\seq{\Paths}$.
 
Let $\vl_1,\dots,\vl_k$ be the vertical lines in the dual plane corresponding to the vectors in $\seq{\dir_1,\dots, \dir_k}$.  By Observation~\ref{obs:dual_problem_constr}, in order to check whether $\seq{\Paths}$ admits a $\seq{\dir_1,\dots, \dir_k}$-monotone simultaneous embedding we need to check whether there exist non-vertical lines $s_1,\dots,s_n$ that cross $\vl_i$ as indicated by $P_i$, $i=1,\dots,k$.  Let $y_{i,j}$ be the $y$-coordinate of the intersection of the line $\vl_i$ with the line $s_j$.
Then, for every path $P_i$ and every pair $(s_l, s_m)$ of neighbored vertices in $P_i$, we have the constraint $y_{i,l} > y_{i,m}$, or, equivalently, $y_{i,l} \geq y_{i,m} + 1$ (since any solution can be scaled along the $x$-axis).
Further, let $q_i$ be the distance between the vertical lines $\vl_i$ and $\vl_{i+1}$.
To ensure that $s_1,\ldots, s_n$ are straight lines, we have the constraint $(y_{2,j} - y_{1,j})/q_1 = (y_{(i+1),j} - y_{i,j})/q_i)$ for all $1\leq j \leq n$ and all $2\leq i < k$.
To conclude, we observe that the constructed linear program can be solved in polynomial time.
\end{proof}

Note that this result does not contradict Proposition~\ref{prop_three_expo}: Although any monotone simultaneous embedding of the example from Figure~\ref{fig_three-expo} needs a grid of exponential (w.r.t.\ the number of vertices) size, the resulting coordinates still admit a representation of only polynomial size. 

In~\cite{Asinowski2008}, Asinowski asked whether deciding realizability of a suballowable sequence with three terms, which, in our terminology, is equivalent to the existence of a monotone simultaneous embedding for three paths, is a tractable problem. 
%Together with Theorem~\ref{thm_3_always}, Theorem~\ref{thm:poly} implies %the following result, by this answering Asinowski's question in the affirmative. 
%a positive answer to this question.
Combining Theorem~\ref{thm_3_always} with Theorem~\ref{thm:poly}, we can answer this question in the affirmative.

\begin{corollary}
\label{corol:threepaths}
Given a set of three paths, it can be decided in polynomial time whether they admit a monotone simultaneous embedding. If such an embedding exists, it can be constructed in polynomial time.
\end{corollary}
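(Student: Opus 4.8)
The plan is to combine the structural result of Theorem~\ref{thm_3_always} with the algorithmic result of Theorem~\ref{thm:poly}. The crucial point is that Theorem~\ref{thm:poly} decides existence only for a \emph{fixed} set of directions, whereas here the directions are free; Theorem~\ref{thm_3_always} closes exactly this gap for three paths, because it tells us that for three paths the existence of an embedding depends only on the radial order of the direction vectors and not on their concrete slopes. Hence it suffices to identify the (essentially unique) radial order that could possibly admit an embedding, pick arbitrary concrete vectors realizing it, and then call the decision procedure of Theorem~\ref{thm:poly} once.

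First I would use Corollary~\ref{cor:set_to_sequence} to compute, in $O(kn^2)=O(n^2)$ time, an adjusted allowable ordering $\seq{P_1,P_2,P_3}$ of the three given paths. If for some pair of indices $i,j$ the set $\Paths_{ij}^{\adj}$ fails to be allowable, then by Lemma~\ref{lemma:allowable} no monotone simultaneous embedding exists and we may immediately answer ``no''. Otherwise, by Proposition~\ref{alowable_set} this ordering is unique up to reversal, and, as noted after Observation~\ref{obs:dual_problem}, the direction vectors of any monotone simultaneous embedding must appear in this radial order (up to reversing all of them simultaneously, which merely point-reflects the entire picture). In particular, there is a single radial order that we need to test.

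Next I would choose any concrete adjusted vectors $\dir_1,\dir_2,\dir_3$ -- for instance three pairwise distinct directions in the upper half-plane listed in the order dictated by $\seq{P_1,P_2,P_3}$ -- and invoke Theorem~\ref{thm:poly} on $\Paths$ together with $\{\dir_1,\dir_2,\dir_3\}$. This runs in polynomial time and either reports that no $\{\dir_1,\dir_2,\dir_3\}$-monotone simultaneous embedding exists or outputs one. For correctness I would argue both directions: if the algorithm produces an embedding, then a monotone simultaneous embedding trivially exists; conversely, if $\Paths$ admits \emph{any} monotone simultaneous embedding, then sorting its graphs by direction and adjusting yields the sequence $\seq{P_1,P_2,P_3}$ with radially ordered vectors, whose radial order coincides with that of $\dir_1,\dir_2,\dir_3$, so by Theorem~\ref{thm_3_always} an embedding with precisely our chosen vectors also exists and the algorithm finds it. Combining the two phases gives the claimed polynomial-time decision and, in the positive case, construction.

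The step I expect to require the most care is the correctness bookkeeping of the third paragraph: one must verify that the radial order extracted from the adjusted allowable sequence is genuinely the \emph{only} candidate, so that the single call to Theorem~\ref{thm:poly} cannot miss a solution realizable under some other arrangement of the vectors. This rests on the uniqueness (up to reversal) from Proposition~\ref{alowable_set} together with the fact that the vectors of a monotone simultaneous embedding of a sequence are forced into the radial order of that sequence; once these are in place, the reduction to a constant number of fixed-direction instances (in fact one, up to the harmless global reversal) is immediate.
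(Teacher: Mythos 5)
Your proposal is correct and takes essentially the same approach as the paper, which proves this corollary simply by combining Theorem~\ref{thm_3_always} with Theorem~\ref{thm:poly}. The bookkeeping you add---using Corollary~\ref{cor:set_to_sequence} and Proposition~\ref{alowable_set} to pin down the unique candidate radial order (up to reversal) before making a single call to the fixed-direction algorithm---is exactly the detail the paper leaves implicit.
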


Consider again the linear program in the proof of  Theorem~\ref{thm:poly}. If the directions of the monotonicity are not provided as a part of the input, i.e., the distances $q_i$ are variables instead of constants, the presented encoding results in a quadratically constrained program. Thus, the program from the proof of Theorem~\ref{thm:poly} does not provide a means for answering the decision question for the existence of a monotone simultaneous embedding of paths in polynomial time.
Moreover, the following proposition suggests that deciding the existence of a monotone simultaneous embedding might be harder for $k\geq4$ than it is for $k=3$; see also Section~\ref{sec_hard}. 

\begin{proposition}
\label{prop:four_paths}
There exists a set $\Paths=\{P_1,P_2,P_3,P_4\}$ of four paths and two sets of vectors $\{\dir_1,\dir_2,\dir_3,\dir_4\}$ and $\{\dir'_1,\dir'_2,\dir'_3,\dir'_4\}$ with the same radial order around the origin, such that $\Paths$ admits a $\{\dir_1,\dir_2,\dir_3,\dir_4\}$-monotone simultaneous embedding, but does not admit a $\{\dir'_1,\dir'_2,\dir'_3,\dir'_4\}$-monotone simultaneous embedding. 
\end{proposition}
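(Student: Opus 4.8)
The plan is to work entirely in the dual setting of Observation~\ref{obs:dual_problem_constr}. There a $\seq{\dir_1,\dots,\dir_4}$-monotone simultaneous embedding corresponds to $n$ non-vertical lines $s_1,\dots,s_n$ crossing four vertical lines $\vl_1,\dots,\vl_4$ in the orders prescribed by $P_1,\dots,P_4$, where the abscissae of the $\vl_i$ are determined by the vectors $\dir_i$. The radial order of the vectors fixes the left-to-right order of the $\vl_i$, while the actual vectors fix their positions. So the proposition is equivalent to exhibiting four paths for which the set of admissible vertical-line positions, taken within one fixed left-to-right order, is a \emph{proper, nonempty} subset; any interior configuration yields $\dir_1,\dots,\dir_4$ and any exterior one yields $\dir'_1,\dots,\dir'_4$.

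First I would pin down the surviving degree of freedom. Exactly as in the proof of Theorem~\ref{thm_3_always}, a projective transformation sending $\vl_1$ to the line at infinity lets me normalize three of the four vertical lines. For $k=3$ this removes all freedom, which is precisely why existence there depends only on the radial order; for $k=4$ one real modulus survives, namely the cross-ratio $\lambda$ of the four directions. Thus ``same radial order, different vectors'' becomes ``same combinatorial position of $\vl_1,\dots,\vl_4$, different $\lambda$,'' and the proposition reduces to producing four paths whose feasibility, viewed as a function of $\lambda$, holds on a proper nonempty subinterval $J$ of the admissible range. I would then choose $\dir_1,\dots,\dir_4$ realizing some $\lambda\in J$ and $\dir'_1,\dots,\dir'_4$ realizing some $\lambda'\notin J$ with the same radial order.

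The heart of the construction is to design the four orderings so that the linear program of Theorem~\ref{thm:poly} is feasible exactly for $\lambda\in J$. Normalizing $\vl_1,\vl_2,\vl_3$ to fixed abscissae $x=0,1,c$ and letting $\vl_4$ sit at the free abscissa $x=d$ (so that $d$ captures the surviving modulus $\lambda$), and writing each dual line as $\ell_j(x)=a_j+m_j x$, the four prescribed orders are the orders of the values $a_j$, $a_j+m_j$, $a_j+c\,m_j$, and $a_j+d\,m_j$. Since $d$ enters only at $\vl_4$, feasibility means that the $(a_j,m_j)$-region cut out by the $\vl_1,\vl_2,\vl_3$-orders is nonempty and that, within it, the prescribed $\vl_4$-order is attainable for the given $d$. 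I would engineer a quadruple of the $s_j$ forced into a near-concurrent, Pappus-type position so that the prescribed $\vl_4$-order is compatible with the first three orders only when $d$ lies in a window $(d_1,d_2)$, giving $J=(d_1,d_2)$. Feasibility for a chosen interior value is then exhibited by writing down the explicit lines, and infeasibility for an exterior $\lambda'$ is certified by a nonnegative combination of the ordering inequalities (a Farkas certificate) that contradicts the collinearity constraints at that value.

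The main obstacle is exactly this last step: showing that four vertical lines can encode a genuine metric (cross-ratio) constraint, something three cannot. Purely ordinal data at the vertical lines cannot force cross-ratio \emph{equalities}, so the subinterval $J$ must arise from two opposite threshold inequalities, and producing explicit paths that realize both thresholds requires the right incidence pattern among the $s_j$; this is where the stretchability phenomenon behind the non-embeddability example above (via~\cite{ringel,Asinowski2008}) is leveraged in a one-parameter family rather than at a single point. I expect the delicate part to be simultaneously verifying realizability at an interior value and producing the Farkas certificate of non-realizability at an exterior value while keeping the radial order unchanged; the dual reformulation and the projective normalization are routine, being the same tools already used for Theorem~\ref{thm_3_always}.
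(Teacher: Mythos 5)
Your framework is sound, and it is in fact the same as the paper's: both arguments live in the dual via Observation~\ref{obs:dual_problem_constr}, where the radial order of the vectors fixes only the left-to-right order of the vertical lines $\vl_1,\dots,\vl_4$ while the actual vectors fix their abscissae, and both reduce the question to whether the linear program of Theorem~\ref{thm:poly} is feasible for some spacings of the $\vl_i$ but not for others. Your observation that, after the projective normalization used for Theorem~\ref{thm_3_always}, exactly one real modulus (a cross-ratio) survives when $k=4$ is a nice conceptual addition that the paper leaves implicit; it explains cleanly why the phenomenon cannot occur for three paths.

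The genuine gap is that for an existence statement of this kind the example \emph{is} the proof, and you never produce one. Everything after ``the heart of the construction'' specifies what the example should accomplish rather than constructing it: you say you ``would engineer'' a near-concurrent, Pappus-type quadruple of dual lines so that feasibility holds exactly on a window $(d_1,d_2)$, that feasibility at an interior value would be shown ``by writing down the explicit lines,'' and that infeasibility at an exterior value would be certified by a Farkas combination --- but no paths, no explicit lines, and no certificate are exhibited, and you yourself flag exactly this step as the unresolved delicate part. As it stands, your argument reduces Proposition~\ref{prop:four_paths} to the assertion that four orderings with a proper, nonempty feasible set of spacings exist, which is a restatement of the proposition, not a proof of it. The paper's proof consists precisely of the content you defer: a concrete dual configuration (\figurename~\ref{fig_infeasible_snapshots}), drawn at its limit spacing so that the prescribed intersection orders are realizable there, together with the observation that moving $\vl_2$ and $\vl_3$ towards each other while leaving $\vl_1$ and $\vl_4$ fixed renders the linear program infeasible. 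Note also that the paper needs no stretchability or Pappus-type machinery for this proposition --- the obstruction is a direct metric incompatibility between the prescribed crossing orders and the spacing of the vertical lines --- so that part of your plan is likely heavier than necessary; but the essential defect is simply that the construction at the core of the statement is missing.
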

\begin{proof}
\begin{figure}[htb]
\centering
\includegraphics{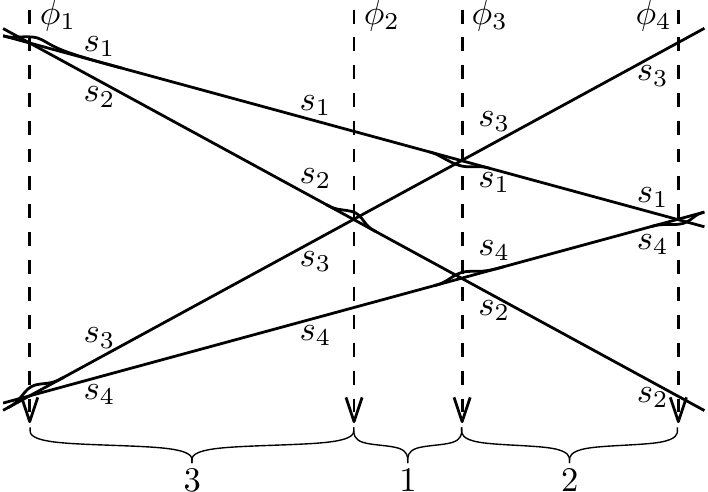}
\caption{The embeddability of this example depends on the relative distance between the vertical lines.
The drawing shows the limit case, the bends in the line show the intended order of intersection.
}
\label{fig_infeasible_snapshots}
\end{figure}
Consider the example shown in \figurename~\ref{fig_infeasible_snapshots}.
If we move the vertical lines $\vl_2$ and $\vl_3$ towards each other while leaving $\vl_1$ and $\vl_4$ unchanged,
%If the relative distance between the vertical lines $v_2$ and $v_3$ in the example gets smaller than 1,
then the linear program does not have a solution and hence, no simultaneous embedding is possible.
%\todo{state more exactly what stays fixed and what is varied: E.g., do $v_1$ and $v_4$ have some fixed distance?}
\end{proof}

\section{Implications for Upward Planar Digraphs}
\label{sec:implications}
Up to this point we have considered monotone simultaneous embeddings of directed paths. In this section we show how our results can be applied to 
provide insight on monotone simultaneous embeddings of more general graph families. More specifically, we consider upward planar digraphs, and upward planar digraphs with Hamiltonian paths. We list several definitions and known results before stating the main result of this section.
An \emph{$st$-digraph} is a biconnected acyclic digraph with exactly one source $s$ and one sink $t$. A \emph{planar $st$-digraph} is an $st$-digraph that is planar and embedded in the plane with vertices $s$ and $t$ on the boundary of the external face.\footnote{In the definition of a (planar) $st$-digraph in~\cite{GiordanoLW08}, the edge $(s,t)$ is required to be an edge of the $st$-digraph. As in our case $s$ and $t$ share a common face, $(s,t)$ can always be added. Therefore, the relevant results from~\cite{GiordanoLW08} (restated in Lemma~\ref{lemma:includingst} and Theorem~\ref{theorem:upse}) apply also to our setting.}
A \emph{topological numbering} of a digraph is an assignment of numbers to its 
vertices, such that for every directed edge $(v,v')$, the number assigned
to $v'$ is greater than the number assigned to~$v$. If each vertex is assigned a distinct number, then we talk about \emph{topological ordering} (see~\figurename~\ref{fig_st_graph}). Every acyclic digraph has at least one topological ordering, and such an ordering can be computed in $O(n)$ time. 

\begin{figure}[htb]
\centering
\includegraphics{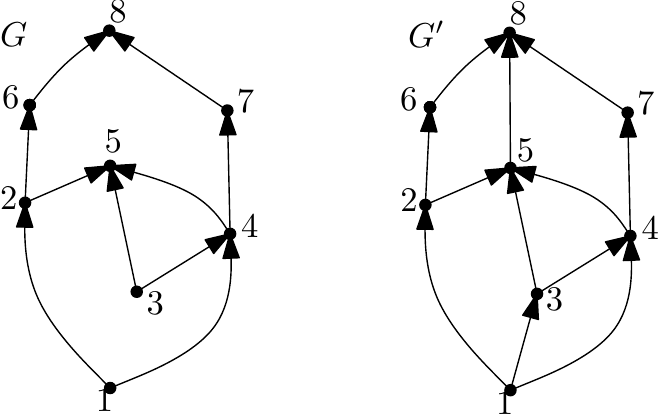}
\caption{A digraph $G$ with a topological ordering $\rho$ defined by the labels at the vertices and an including planar $st$-digraph $G'$ of $G$ that preserves $\rho$.}
\label{fig_st_graph}
\end{figure}

Let $G$ be an upward planar digraph. A planar $st$-digraph that includes $G$ as a spanning subgraph is called \emph{including planar $st$-digraph} of $G$. While every upward planar digraph $G$ can be augmented to a planar $st$-digraph $G'$ that has $G$ as a subgraph (see e.g.~\cite{DiBattistaT88}), it is not generally true that an upward planar digraph $G$ with a topological ordering $\rho$ can be augmented to a planar $st$-digraph $G'$ such that $G'$ still fulfills the topological ordering $\rho$ of $G$ (see~\figurename~\ref{fig_st_graph}).
% \todo{Example figure? Definitions of st-graph, topological numbering, and topological ordering do not precisely match those in~\cite{GiordanoLW08}. Is this on purpose?}
If the latter is true, we say that $G'$ \emph{preserves} $\rho$. However, as the following lemma by Giordano et al.~\cite{GiordanoLW08} states, it can be tested efficiently whether $G'$ exists. 

\begin{lemma}{\bf (\cite[Lemma 5]{GiordanoLW08})}
\label{lemma:includingst}
Let $G$ be an upward planar digraph with $n$ vertices and $\rho$ be its topological ordering. There exists an $O(n)$-time algorithm that tests whether there exists an including planar $st$-digraph of $G$ that preserves $\rho$.
\end{lemma}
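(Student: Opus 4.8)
The plan is to reduce the existence of an including planar $st$-digraph that preserves $\rho$ to a constrained edge-augmentation of a fixed upward planar embedding of $G$, and then to check feasibility locally, face by face, in total linear time. First I would record the easy structural consequences of preserving $\rho$: since every edge of the desired $G'$ must run from a smaller to a larger $\rho$-value, $G'$ is automatically acyclic, and its unique source and sink are forced to be the $\rho$-minimum vertex $s$ and the $\rho$-maximum vertex $t$. Hence the test begins by identifying $s$ and $t$ and fixing an upward planar embedding of $G$ with $s$ and $t$ on the outer face; such an embedding exists because $G$ is upward planar, and it is bimodal.

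Next I would reformulate the goal. A planar acyclic single-source single-sink digraph, with source and sink on the outer face, is a planar $st$-digraph precisely when every internal face is bounded by two directed paths, i.e.\ has exactly one source-switch and one sink-switch (and this is exactly the biconnectivity required by the definition); see \cite{DiBattistaT88}. Therefore constructing $G'$ amounts to inserting a set of edges, each drawn inside a single face and directed from its lower-$\rho$ endpoint to its higher-$\rho$ endpoint, so that (i) every resulting internal face is such an $st$-face and (ii) every vertex other than $s$ (resp.\ $t$) receives an incoming (resp.\ outgoing) edge. Because each inserted edge lives inside one face of $G$, the planarity interaction is confined to the chords placed within a common face, so the augmentation decomposes into a local problem on each face, coupled only through the global degree requirement (ii).

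For the local problem I would analyze a single face $f$. Walking along its boundary, the switch vertices of $f$ alternate between source-switches and sink-switches, and since the embedding is bimodal there are equally many of each. Saturating $f$ means adding a set of pairwise non-crossing $\rho$-monotone chords that leaves exactly one source-switch and one sink-switch on $f$. I expect this to be governed by a simple condition on the cyclic sequence of $\rho$-values of the switches of $f$ -- intuitively, that the superfluous source-switches can be absorbed toward the source-switch of smallest $\rho$ and the superfluous sink-switches toward the sink-switch of largest $\rho$ without forcing two chords to cross -- and that this condition, together with the degree requirement, is verifiable in time $O(|f|)$. Summing over all faces gives $\sum_f O(|f|) = O(n)$ by planarity, and the same sweep outputs the augmenting edges whenever the test succeeds; biconnectivity of the result then follows from the standard fact that a single-source single-sink planar acyclic digraph all of whose internal faces are $st$-faces has no cut vertices.

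The main obstacle is precisely this local-to-global step: proving that per-face $\rho$-monotone non-crossing saturability, together with the global in/out-degree condition (ii), is both necessary and sufficient for a $\rho$-preserving augmentation to exist. The genuine tension is between planarity (the chords inside a face must be mutually non-crossing) and the $\rho$-order (each chord may only join a lower-$\rho$ switch to a higher-$\rho$ switch), since these two constraints can conflict and rule out a face that would otherwise be saturable; pinning down the exact combinatorial condition that captures this interaction is the crux. A secondary obstacle, should the embedding not be part of the input, is to argue that testing a single, appropriately chosen upward planar embedding suffices, rather than searching over all of them.
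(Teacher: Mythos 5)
This lemma is never proved in the paper you are being compared against: it is imported verbatim, with citation, as \cite[Lemma~5]{GiordanoLW08} and used as a black box (in Corollaries~\ref{corol:poly} and~\ref{corol:threegraphs}). So the comparison below is against what a correct proof of Lemma~\ref{lemma:includingst} requires, and against the known argument in the cited source, not against an in-paper proof.

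Your proposal has a genuine gap, and it is not the one you label ``secondary'': it is fatal to the whole architecture. You fix \emph{one} upward planar embedding of $G$ with the $\rho$-extremal vertices $s,t$ on the outer face (already an unjustified step: upward planarity of $G$ gives you some upward planar embedding, not one with two designated vertices on the outer face), and then run a face-by-face test. But the existence of a $\rho$-preserving augmentation genuinely depends on the choice of embedding, so no fixed-embedding, face-local test can be correct unless you prove that some canonically chosen embedding succeeds whenever any embedding does --- and that proof is exactly what is missing. A concrete counterexample: let $G$ be the prism $st$-digraph on $\{s,a,c,b,d,t\}$ with directed triangles $s\to a$, $a\to c$, $s\to c$ and $b\to d$, $d\to t$, $b\to t$, matching edges $s\to b$, $a\to d$, $c\to t$, plus one pendant edge $a\to x$, and take $\rho=(s,a,b,c,d,x,t)$. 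The only $\rho$-increasing out-edge $x$ can ever receive is $x\to t$, and indeed an including planar $st$-digraph preserving $\rho$ exists: embed $x$ in the face $(a,c,t,d)$ and add $x\to t$. However, embedding $x$ inside the triangle $(s,a,c)$ also gives a perfectly valid upward planar embedding of $G$ with $s,t$ on the outer face; there, every face incident to $x$ contains only $\rho$-smaller vertices, so your per-face saturation test reports failure on a yes-instance. Since the prism is $3$-connected, all the freedom sits in where the pendant piece goes, and your sweep has no mechanism for making that choice. On top of this, the per-face combinatorial condition itself is never pinned down (you say so yourself), and your biconnectivity remark is also imprecise: all internal faces being $st$-faces does not exclude cut vertices (two directed triangles sharing a vertex), one needs the boundary paths of every face, including the outer one, to meet only at their endpoints.

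For comparison, the correct linear-time argument behind \cite[Lemma~5]{GiordanoLW08} never fixes an embedding. The key equivalence is global: $G$ has an including planar $st$-digraph preserving $\rho$ if and only if $G$ is \emph{level planar} with respect to the leveling that places each vertex $v$ on its own level $\rho(v)$, i.e., if and only if $G$ has a planar drawing with $y$-monotone edges in which the $y$-order of the vertices is exactly $\rho$. One direction uses the fact (Tamassia--Tollis visibility representations) that a planar $st$-digraph admits an upward planar polyline drawing realizing \emph{any} prescribed topological numbering as vertex heights, in particular $\rho$; restricting that drawing to $G$ gives the level planar drawing. Conversely, a level planar embedding with distinct levels can be augmented, adding only level-increasing (hence $\rho$-increasing) edges, to a level planar $st$-digraph. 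Linear time then follows by invoking the level planarity testing algorithm of J\"unger, Leipert, and Mutzel; the PQ-tree machinery inside that algorithm is precisely the device that searches over all embeddings simultaneously --- the step your face-local approach cannot replace. In short, your write-up identifies the right obstacles but resolves neither of them, and the fixed-embedding strategy it is built on is refuted by the example above.
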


Let again $G$ be an upward planar digraph and $\dir$ be an arbitrary vector. In a $\dir$-monotone drawing of $G$, its vertices, when orthogonally projected on $\dir$,
imply a topological ordering of $G$. However, a general upward planar graph can have exponentially many distinct topological orderings. This explains the additional freedom contained in the problem of monotone simultaneous embeddings for upward planar digraphs, since each digraph can appear in such an embedding in many different ways, according to the number of its topological orderings.

Thus, to generalize our results for paths to upward planar digraphs,
%In order to state implications of our results for paths for upward planar digraphs, 
we have to consider the restricted case where each upward planar digraph is provided together with a topological ordering. 
More precisely, let $G$ be an upward planar digraph and $\rho$ its topological ordering.  
A $\dir$-monotone drawing of $G$, such that the order in which its vertices appear in the orthogonal projection on $\dir$ coincides with $\rho$, is called ($\dir,\rho)$-\emph{monotone}.
More generally, an upward drawing of $G$ is called \emph{$\rho$-constrained} if it is $(\dir,\rho)$-monotone for some vector $\dir$.
In~\cite{GiordanoLW08,MchedlidzeS09}, $\rho$-constrained drawings of upward planar digraphs were considered for so-called \emph{book embeddings}, drawings where vertices are required to lie on a single oriented line and edges are represented by curves monotonically increasing in the direction of this line. In~\cite{GiordanoLW08} $\rho$-constrained book embeddings were utilized to construct a so-called \emph{upward point set embedding} of an upward planar digraph with a given \emph{mapping}, that is an upward planar drawing where the positions of the vertices of the graph are provided as a part of the input. The following theorem is a restricted version of \cite[Theorem~3]{GiordanoLW08}. 

\begin{theorem}[\cite{GiordanoLW08}]
\label{theorem:upse} Let $G$ be an upward planar digraph with $n$ vertices, $\rho$ be a topological ordering of $G$, and $\dir$ be a vector. Assume that the vertices of $G$ are positioned in the plane such that their orthogonal projection on $\dir$ coincides with $\rho$.
Then, %$G$ admits a $(\dir,\rho)$-monotone planar drawing with polygonal lines as edges 
the edges of $G$ can be drawn as polygonal lines resulting in a $(\dir,\rho)$-monotone planar drawing
if and only if $G$ has an including planar $st$-digraph $G'$ preserving~$\rho$. Also, such a $(\dir,\rho)$-monotone drawing of $G$ (with at most $2n-3$ bends per edge) can be computed in $O(n^2)$. % time with at most $2n-3$ bends per edge.
\end{theorem}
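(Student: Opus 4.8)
The statement is an ``iff'' together with an algorithmic claim, so the plan is to prove each implication separately and then read off the bounds. Throughout I would first rotate the plane so that $\dir$ points in the positive $y$-direction; by hypothesis the vertices then have pairwise distinct $y$-coordinates that increase exactly in the order $\rho$, and ``$\dir$-monotone'' becomes ``$y$-monotone''.

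For the necessity direction ($\Rightarrow$), suppose $G$ has a $(\dir,\rho)$-monotone planar drawing $\Gamma$. I would invoke the classical fact that an upward planar drawing can be augmented to a planar $st$-digraph by inserting edges inside its faces. Concretely, in each internal face $f$ of $\Gamma$, any vertex that is a local minimum of $f$ with respect to $y$ but is not the lowest vertex of $f$ can be connected by a new $y$-monotone arc, drawn inside $f$, to a vertex below it in $f$; the symmetric operation handles local maxima. Iterating until every face carries a unique local source and a unique local sink yields a planar $st$-digraph $G'$ whose global source and sink are the $\rho$-minimum and $\rho$-maximum vertices. The key observation specific to our setting is that every inserted arc is $y$-monotone and hence runs from a smaller-$\rho$ to a larger-$\rho$ endpoint, so $G'$ preserves $\rho$ and contains $G$ as a spanning subgraph, i.e.\ it is an including planar $st$-digraph of $G$ preserving $\rho$.

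For the sufficiency direction ($\Leftarrow$), suppose such a $G'$ exists. I would exploit the defining structural property of a planar $st$-digraph: every face has exactly one local source and one local sink, which induces a consistent left-to-right total order on the edges met by any horizontal sweep line, i.e.\ a planar ``wiring diagram'' in which each edge carries a well-defined rank. With the vertices pinned at their prescribed positions, I would route each edge $(u,w)$ of $G \subseteq G'$ as a $y$-monotone polygonal line whose bends sit on the horizontal levels of the vertices lying strictly between $u$ and $w$ in $\rho$, choosing the $x$-coordinate of each bend according to the edge's rank on that level. Consistency of the ranks guarantees that distinct edges never cross and that each edge passes on the correct side of every intervening vertex; discarding the augmenting edges of $G'$ leaves a planar $(\dir,\rho)$-monotone drawing of $G$. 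For the quantitative claims, an edge crosses at most $n-2$ intermediate levels, so at most $2n-3$ bends suffice, and computing all ranks and placing all bends over the $n$ levels costs $O(n)$ per edge and $O(n^2)$ in total.

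The main obstacle is the sufficiency direction: one must show that the left-to-right ranks extracted from the embedding of $G'$ are globally consistent across all levels, so that the routed arcs are simultaneously $y$-monotone, crossing-free, and faithful to the fixed vertex coordinates within the stated bend budget. The unique-local-source/sink property of $st$-digraphs is precisely what makes such a wiring well defined, so the crux is to formalize this wiring and verify that it can be realized at the prescribed coordinates. Since the result is the restricted version of \cite[Theorem~3]{GiordanoLW08}, I would ultimately reduce to that statement once the $\rho$-preservation of the augmentation and the coordinate-realization of the wiring are in place.
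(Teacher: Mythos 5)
The paper itself contains no proof of this statement: Theorem~\ref{theorem:upse} is imported from Giordano, Liotta, and Whitesides~\cite{GiordanoLW08} as a restricted version of their Theorem~3 (which is why it carries the citation), and the only in-paper content attached to it is the footnote reconciling the two definitions of a planar $st$-digraph (the edge $(s,t)$ can always be added because $s$ and $t$ share the outer face). So your attempt cannot be matched against an in-paper argument; the relevant comparison is with the proof in~\cite{GiordanoLW08}, and measured against that your sketch is an essentially correct reconstruction that takes a genuinely different route in the sufficiency direction. Your necessity argument is the classical augmentation of an upward planar drawing to a planar $st$-digraph (Kelly, Di Battista--Tamassia), plus the one observation that is specific to this setting: every augmenting edge is drawn $y$-monotone, hence goes from a $\rho$-smaller to a $\rho$-larger vertex, so $\rho$ remains a topological ordering of $G'$. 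That is correct, though you should also justify that the augmented graph is biconnected with $s$ and $t$ on the outer face, since the paper's definition of an $st$-digraph requires this. For sufficiency, \cite{GiordanoLW08} do not route edges slab by slab as you do; they characterize the existence of an including planar $st$-digraph preserving $\rho$ via upward topological book embeddings with spine order $\rho$, and then bend each edge at its spine crossings after mapping the spine through the prescribed points---that is where the bound of $2n-3$ bends per edge comes from. Your direct wiring argument can be made to work and in fact yields at most $n-2$ bends per edge (one per intermediate level): the key points, which you should make explicit, are that each horizontal level carries exactly one prescribed vertex, so the fixed $x$-coordinates never conflict with the combinatorial slab order, and that the left-to-right order of edges is identical on the two boundary lines of each slab, so straight segments between consecutive levels cannot cross. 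Two soft spots remain: the ``consistent sweep order'' you invoke is not the definition of a planar $st$-digraph but a theorem about it (every face is bounded by two directed paths, and the in-edges and out-edges around each vertex are consecutive), and your closing reduction to \cite[Theorem~3]{GiordanoLW08} makes the rest of your argument redundant---that citation by itself is precisely what the paper does.
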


Let $\Graphs=\{G_1,\dots,G_k\}$ be a set of upward planar digraphs with topological orderings $\Ords=\{\rho_1,\dots,\rho_k\}$, respectively, and let ${\vec{\Dirs}}=\{\dir_1,\dots,\dir_k\}$ be a set of vectors. We say that a simultaneous embedding $\Gamma$ of $\Graphs$ is \emph{$(\Dirs,\Ords)$-monotone} if the drawing of $G_i$ implied by $\Gamma$ is $(\dir_i,\rho_i)$-monotone, for each $i=1,\dots,k$. More generally, a simultaneous embedding of $\Graphs$ is $\Ords$-constrained, if it is $(\Dirs,\Ords)$-monotone for some vectors $\Dirs$.  If $G=(V,E)$ is an upward planar digraph and $\rho$ is a topological ordering of $G$, the directed path $P$ with vertex set $V$, which traverses the vertices in the order indicated by the topological ordering $\rho$, is said to be \emph{implied by} $\rho$. Now we are ready to state the main result of this section.

\begin{theorem}
\label{theorem:equiv}
Let $\Graphs=\{G_1,\dots,G_k\}$ be a set of upward planar digraphs on the same set of $n$ vertices and let $\Dirs=\{\dir_1,\dots,\dir_k\}$ be a set of vectors. Consider $\Ords=\{\rho_1,\dots,\rho_k\}$ and $\Paths=\{P_1,\dots,P_k\}$, where $\rho_i$ is a topological ordering of $G_i$ and $P_i$ is the directed path implied by $\rho_i$, for $i=1,\dots,k$. 

The set $\Graphs$ admits a $(\Dirs,\Ords)$-monotone simultaneous embedding if and only if the set $\Paths$ of paths  admits a $\Dirs$-monotone simultaneous embedding, and each $G_i$, $i=1,\dots,k$, has an including planar $st$-digraph $G'_i$ that preserves $\rho_i$. In case of existence, a $(\Dirs,\Ords)$-monotone simultaneous embedding of $\Graphs$ can be constructed in $O(kn^2)$ time.  
\end{theorem}

\begin{proof}
	For the ``only-if'' part, let $\Gamma$ be a $(\Dirs,\Ords)$-monotone simultaneous embedding of $\Graphs$.  Recall that the vertices of $G_i$ ($i=1,\dots,k$), when projected on $\dir_i$, appear in topological ordering $\rho_i$, and therefore, by the definition of $P_i$, in the order they appear in $P_i$. Thus, if we use the position of the vertices of $G_i$ given by $\Gamma$,  and draw the edges of $P_i$ straight-line, we obtain a $\dir_i$-monotone drawing of $P_i$. 
Observe that $\Gamma$ implies a $(\dir_i,\rho_i)$-monotone drawing of $G_i$, thus, by Theorem~\ref{theorem:upse}, $G_i$ has an including planar $st$-digraph $G'_i$ preserving $\rho_i$.  
% \todo{we're not sure we understand this (the only-if) part of the proof ...}

For the ``if'' part, let $\Gamma$ be a $\Dirs$-monotone simultaneous embedding of $\Paths$ and let $G_i'$ be an including planar $st$-digraph of $G_i$, preserving $\rho_i$, $i=1,\dots,k$. The vertices of $P_i$ (therefore of $G_i'$) appear in the orthogonal projection on $\dir_i$ in the same order as they appear in $P_i$, and therefore in the same order as in~$\rho_i$.  %Then, by Theorem~\ref{theorem:upse}, using the positions of the vertices as in $\Gamma$, we infer that there exists a $(\dir_i,\rho_i)$-monotone drawing $\Gamma_i$ of $G_i$, for each $i=1,\dots,k$. 
With this placement of vertices, we can apply Theorem~\ref{theorem:upse} $k$ times, to obtain a $(\dir_i,\rho_i)$-monotone drawing $\Gamma_i$ of $G_i$, $i=1,\dots,k$.   For each vertex $v_j$, all corresponding points $s_{i,j}$ of the drawings $\Gamma_i$, $i=1,\dots,k$, are identical. %, %, a vertex lies on the same point, 
Therefore, the drawings $\Gamma_1,\dots, \Gamma_k$, comprise a $(\Dirs,\Ords)$-monotone simultaneous embedding of $\Graphs$. 
Finally, the claimed running time for the construction follows directly from  Theorem~\ref{theorem:upse}. 
\end{proof}

In the following, using Theorem~\ref{theorem:equiv}, we derive several implications of Theorem~\ref{thm:poly} and Corollary~\ref{corol:threepaths} for upward planar digraphs.
\begin{corollary}
\label{corol:poly}
Given a set $\Dirs=\{\dir_1, \ldots, \dir_k\}$ of vectors and a set $\Graphs=\{G_1,\dots, G_k\}$ of upward planar digraphs together with topological orderings $\Ords=\{\rho_1,\dots,\rho_k\}$,
it can be decided in polynomial time (w.r.t.\ the input size) whether a $(\Dirs,\Ords)$-monotone simultaneous embedding of $\Graphs$ exists. In case of existence, such an embedding can be constructed in polynomial time.
\end{corollary}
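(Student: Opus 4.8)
The plan is to obtain the corollary as a direct assembly of Theorem~\ref{theorem:equiv}, Theorem~\ref{thm:poly}, and Lemma~\ref{lemma:includingst}, so that no genuinely new combinatorial argument is needed; the only things to verify are that each ingredient is invoked on an input of the correct form and that the total running time is polynomial in the input size.

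First I would translate the given digraph instance into a path instance. For each $i$, let $P_i$ be the directed path implied by the topological ordering $\rho_i$, i.e., the path that visits the vertices of the common vertex set in the order prescribed by $\rho_i$; constructing all $k$ such paths takes $O(kn)$ time. By Theorem~\ref{theorem:equiv}, the set $\Graphs$ admits a $(\Dirs,\Ords)$-monotone simultaneous embedding if and only if two conditions hold simultaneously: (i)~the path set $\Paths=\{P_1,\dots,P_k\}$ admits a $\Dirs$-monotone simultaneous embedding, and (ii)~every $G_i$ possesses an including planar $st$-digraph preserving $\rho_i$.

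Next I would test the two conditions independently. Condition~(i) is exactly the decision problem solved by Theorem~\ref{thm:poly}: given the fixed vectors $\Dirs$ and the paths $\Paths$, the existence of a $\Dirs$-monotone simultaneous embedding is decidable in polynomial time via the linear program in its proof, which in the positive case additionally returns the vertex coordinates. Condition~(ii) is tested by applying Lemma~\ref{lemma:includingst} once to each pair $(G_i,\rho_i)$, at a cost of $O(n)$ time per graph and hence $O(kn)$ in total. If either test reports failure, the equivalence of Theorem~\ref{theorem:equiv} immediately lets us answer that no $(\Dirs,\Ords)$-monotone simultaneous embedding exists.

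If both tests succeed, I would invoke the constructive direction of Theorem~\ref{theorem:equiv}: using the vertex positions returned by the linear program of Theorem~\ref{thm:poly}, apply Theorem~\ref{theorem:upse} to each $G_i$ to route its edges as monotone polygonal lines, yielding the desired simultaneous embedding in $O(kn^2)$ time. Since each step runs in time polynomial in the size of the input (the graphs, the vectors, and the orderings), and since the coordinates emitted by the linear program admit a polynomial-size encoding, the whole procedure is polynomial. The hard part will not be a combinatorial obstacle but a bookkeeping one: ensuring that the coordinates produced by the path-embedding linear program are precisely the placement hypothesized by Theorem~\ref{theorem:upse}, namely that the orthogonal projection of the vertices on $\dir_i$ coincides with $\rho_i$. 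This holds because, by construction, the dual lines $s_j$ cross each $\vl_i$ in the order in which the vertices appear along $P_i$, and hence in the order given by $\rho_i$.
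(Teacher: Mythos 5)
Your proposal is correct and follows essentially the same route as the paper's own proof: reduce to the path instance implied by the orderings, apply Theorem~\ref{theorem:equiv} for the equivalence, check condition~(1) with Theorem~\ref{thm:poly} and condition~(2) with Lemma~\ref{lemma:includingst}, and obtain the construction and its running time from Theorem~\ref{theorem:equiv}. The additional bookkeeping you flag at the end (that the LP coordinates satisfy the placement hypothesis of Theorem~\ref{theorem:upse}) is already handled inside the proof of Theorem~\ref{theorem:equiv}, so no extra argument is needed there.
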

\begin{proof}  Let $\Paths=\{P_1,\dots,P_k\}$, where $P_i$ is the path implied by $\rho_i$, $i=1,\dots,k$. 
By Theorem~\ref{theorem:equiv}, $\Graphs$ admits a $(\Dirs,\Ords)$-monotone simultaneous embedding if and only if (1) the set $\Paths$ of paths admits a $\Dirs$-monotone simultaneous embedding and (2) each $G_i$, $i=1,\dots,k$, has an including planar $st$-digraph $G'_i$ preserving $\rho_i$.
By Theorem~\ref{thm:poly}, Condition~(1) can be checked in polynomial time, and the same is true for Condition~(2) by Lemma~\ref{lemma:includingst}.
For the construction, the claimed running time follows from Theorem~\ref{theorem:equiv}. 
\end{proof}

\begin{corollary}
\label{corol:threegraphs}
Given a set  $\{G_1,G_2,G_3\}$ of three upward planar digraphs and their topological orderings $\Ords=\{\rho_1,\rho_2,\rho_3\}$, it can be decided in polynomial time (w.r.t.\ the input size) whether there exists an $\Ords$-constrained simultaneous embedding of $\{G_1,G_2,G_3\}$. If such an embedding exists, it can be constructed in polynomial time as well. 
\end{corollary}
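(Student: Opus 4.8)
The plan is to reduce this to the already-settled question about three paths, in exactly the way that Corollary~\ref{corol:poly} reduces the fixed-direction version for general~$k$. The governing principle is Theorem~\ref{theorem:equiv}, together with the observation that its two characterizing conditions split cleanly along the two aspects of the problem: the geometric realizability of the induced paths, which depends on the directions of monotonicity, and the combinatorial augmentability of each digraph to an including planar $st$-digraph, which does not.

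First I would set $\Paths=\{P_1,P_2,P_3\}$, where $P_i$ is the directed path implied by $\rho_i$, and recall that an $\Ords$-constrained simultaneous embedding is by definition a $(\Dirs,\Ords)$-monotone simultaneous embedding for \emph{some} set of vectors $\Dirs$. Applying Theorem~\ref{theorem:equiv} under the existential quantifier over $\Dirs$, the set $\{G_1,G_2,G_3\}$ admits an $\Ords$-constrained simultaneous embedding if and only if there is a $\Dirs$ for which (1) $\Paths$ admits a $\Dirs$-monotone simultaneous embedding and (2) each $G_i$ has an including planar $st$-digraph preserving $\rho_i$. The key step is to note that condition~(2) is independent of $\Dirs$, so the existential quantifier may be pushed inside the conjunction, yielding the decoupled criterion: $\{G_1,G_2,G_3\}$ admits an $\Ords$-constrained simultaneous embedding if and only if (1$'$)~$\Paths$ admits a monotone simultaneous embedding for \emph{some} choice of directions, and (2)~each $G_i$ admits an including planar $st$-digraph preserving $\rho_i$.

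With this criterion in hand, the decision procedure is immediate. Condition~(1$'$) is precisely the three-path existence question, which by Corollary~\ref{corol:threepaths} is decidable in polynomial time; this is the only place where the restriction to three graphs is used, since there we may leave the directions unspecified (relying on Theorem~\ref{thm_3_always} and Theorem~\ref{thm:poly}). Condition~(2) is checked in $O(n)$ time per graph by Lemma~\ref{lemma:includingst}. For the construction, assuming both conditions hold, I would first invoke the constructive part of Corollary~\ref{corol:threepaths} to obtain a vertex placement and directions $\Dirs$ realizing a $\Dirs$-monotone simultaneous embedding of $\Paths$, and then feed this placement into the ``if'' direction of Theorem~\ref{theorem:equiv} (which applies Theorem~\ref{theorem:upse} three times) to draw the edges of each $G_i$ as polygonal monotone curves, producing the desired embedding in polynomial time.

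The only genuine subtlety -- and the step I would be most careful about -- is the legitimacy of moving the existential quantifier over $\Dirs$ through the conjunction, i.e.\ the claim that a single $\Dirs$ can simultaneously serve conditions~(1) and~(2). This is sound precisely because condition~(2) imposes no constraint on $\Dirs$ at all: any $\Dirs$ witnessing~(1$'$) automatically satisfies the hypothesis of Theorem~\ref{theorem:equiv} in conjunction with the $\Dirs$-free condition~(2). Everything else is bookkeeping of the polynomial running times already guaranteed by Corollary~\ref{corol:threepaths}, Lemma~\ref{lemma:includingst}, and Theorem~\ref{theorem:equiv}.
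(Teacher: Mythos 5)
Your proposal is correct and follows essentially the same route as the paper's own proof: reduce via Theorem~\ref{theorem:equiv} to the two decoupled conditions, decide condition~(1$'$) with Corollary~\ref{corol:threepaths} and condition~(2) with Lemma~\ref{lemma:includingst}, and construct via Theorem~\ref{theorem:equiv}. Your explicit justification of pushing the existential quantifier over $\Dirs$ through the conjunction is the same point the paper handles by citing Theorem~\ref{thm_3_always} (``vectors can be chosen arbitrarily''), just spelled out more carefully.
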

\begin{proof}
	Let $P_i$ be the path implied by $\rho_i$, $i=1,2,3$. By definition,  $\{G_1,G_2,G_3\}$ admits a $\{\rho_1,\rho_2,\rho_3\}$-constrained simultaneous embedding, if and only if there exist vectors  $\dir_1,\dir_2,\dir_3$ such that there exists  a $(\{\dir_1,\dir_2,\dir_3\},\{\rho_1,\rho_2,\rho_3\})$-monotone simultaneous embedding of $\{G_1,G_2,G_3\}$.  By Theorem~\ref{theorem:equiv}, this is equivalent to the facts that: (1) there exist vectors  $\dir_1,\dir_2,\dir_3$ such that the set $\{P_1,P_2,P_3\}$ of paths  admits a $\{\dir_1,\dir_2,\dir_3\}$-monotone simultaneous embedding, and (2) each $G_i$, $i=1,\dots,k$, has an including planar $st$-digraph $G'_i$ preserving $\rho_i$. Conditions~(1) and~(2) can be checked in polynomial time by Corollary~\ref{corol:threepaths} and Lemma~\ref{lemma:includingst}, respectively. By Theorem~\ref{thm_3_always}, vectors $\{\dir_1,\dir_2,\dir_3\}$ can be chosen arbitrarily. Thus, the time complexity of the construction follows directly from Theorem~\ref{theorem:equiv}. 
\end{proof}

Observe that if an upward planar digraph is Hamiltonian\footnote{We say that an upward planar digraph is \emph{Hamiltonian} if it has a directed Hamiltonian path.}, then it has a unique topological ordering. This topological ordering can be found in $O(n)$ time.  Thus, 
from Corollary~\ref{corol:poly} and Corollary~\ref{corol:threegraphs}, we derive the following.

\begin{corollary}
\label{corol:poly_hamiltonian}
Given a set $\Dirs=\{\dir_1, \ldots, \dir_k\}$ of vectors and a set $\Graphs=\{G_1,\dots, G_k\}$ of Hamiltonian upward planar digraphs, 
it can be decided in polynomial time (w.r.t.\ the input size) whether a $\Dirs$-monotone simultaneous embedding of $\Graphs$ exists. If such an embedding exists, it can be constructed in polynomial time.
\end{corollary}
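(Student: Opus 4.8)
The plan is to reduce the statement directly to Corollary~\ref{corol:poly}, which already solves the case in which both the vectors \emph{and} topological orderings are supplied. The leverage is the observation recorded just above the corollary: a Hamiltonian upward planar digraph $G_i$ has a \emph{unique} topological ordering $\rho_i$, obtainable in $O(n)$ time by simply reading the vertices off the directed Hamiltonian path. Accordingly, I would first compute, for each $i=1,\dots,k$, this unique ordering $\rho_i$, assembling the set $\Ords=\{\rho_1,\dots,\rho_k\}$; this preprocessing costs only $O(kn)$ time in total.

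The key step is to show that, precisely because each $\rho_i$ is unique, a $\Dirs$-monotone simultaneous embedding of $\Graphs$ coincides with a $(\Dirs,\Ords)$-monotone simultaneous embedding. In any $\dir_i$-monotone drawing of $G_i$, every directed edge increases in the direction of $\dir_i$; in particular the edges of the Hamiltonian path do, which forces the orthogonal projections of the vertices onto $\dir_i$ to be pairwise distinct and to appear exactly in the order dictated by the Hamiltonian path, i.e., in $\rho_i$. Hence every $\dir_i$-monotone drawing of $G_i$ is automatically $(\dir_i,\rho_i)$-monotone, and the converse is immediate from the definitions. Consequently $\Graphs$ admits a $\Dirs$-monotone simultaneous embedding if and only if it admits a $(\Dirs,\Ords)$-monotone one.

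With this equivalence in hand, both the decision and the construction follow verbatim from Corollary~\ref{corol:poly}: the derived orderings $\Ords$ serve as the missing part of its input, so its polynomial-time test and polynomial-time construction apply directly, the only additional cost being the $O(kn)$ extraction of the Hamiltonian paths. The one point requiring care — and the closest thing to an obstacle — is exactly the equivalence of the two notions of embedding, which rests entirely on the uniqueness of the topological ordering. This uniqueness is what the Hamiltonicity assumption buys us: for a general upward planar digraph there may be exponentially many topological orderings, and a $\Dirs$-monotone drawing would be free to realize any one of them, so the reduction to the fixed-ordering setting of Corollary~\ref{corol:poly} would no longer be valid. I would therefore highlight this uniqueness as the linchpin of the argument and keep the remainder of the proof a short appeal to the established machinery.
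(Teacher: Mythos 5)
Your proposal is correct and follows essentially the same route as the paper: the paper derives this corollary from Corollary~\ref{corol:poly} via the one-line observation that a Hamiltonian upward planar digraph has a unique topological ordering, computable in $O(n)$ time. Your write-up merely makes explicit the equivalence (any $\dir_i$-monotone drawing of $G_i$ is forced to be $(\dir_i,\rho_i)$-monotone because the Hamiltonian path fixes the projection order) that the paper leaves implicit, which is a faithful filling-in of detail rather than a different argument.
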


\begin{corollary}
\label{corol:threegraphs_hamiltonian}
Given a set $\Graphs=\{G_1,G_2,G_3\}$ of three Hamiltonian upward planar digraphs, it can be decided in polynomial time (w.r.t.\ the input size) whether $\Graphs$ admits a monotone simultaneous embedding. In case of existence, such an embedding can be constructed in polynomial time. 
\end{corollary}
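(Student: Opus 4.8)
The plan is to reduce the unconstrained question for three Hamiltonian upward planar digraphs to the $\Ords$-constrained question already settled by Corollary~\ref{corol:threegraphs}. The key structural fact is the one recorded immediately before the statement: a Hamiltonian upward planar digraph has a \emph{unique} topological ordering, which can be computed in $O(n)$ time. For each $G_i$ I would first compute this unique ordering $\rho_i$ and put $\Ords=\{\rho_1,\rho_2,\rho_3\}$.

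The crux is to argue that, for Hamiltonian inputs, ``$\Graphs$ admits a monotone simultaneous embedding'' and ``$\Graphs$ admits an $\Ords$-constrained simultaneous embedding'' are equivalent. One direction is immediate: any $\Ords$-constrained embedding is by definition $(\Dirs,\Ords)$-monotone for some $\Dirs$, hence $\dir_i$-monotone for each $i$, and therefore a monotone simultaneous embedding. For the converse, suppose $\Graphs$ admits a $\Dirs$-monotone simultaneous embedding for some $\Dirs=\{\dir_1,\dir_2,\dir_3\}$. In the $\dir_i$-monotone drawing of $G_i$, orthogonally projecting the vertices onto $\dir_i$ yields a topological ordering of $G_i$ (as recalled before Theorem~\ref{theorem:upse}). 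Since $G_i$ is Hamiltonian, this ordering must coincide with the unique one, namely $\rho_i$; hence the drawing of $G_i$ is $(\dir_i,\rho_i)$-monotone, and the whole embedding is $(\Dirs,\Ords)$-monotone, i.e., $\Ords$-constrained.

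With this equivalence in hand, the statement follows by invoking Corollary~\ref{corol:threegraphs} on $\{G_1,G_2,G_3\}$ together with the computed $\Ords=\{\rho_1,\rho_2,\rho_3\}$: that corollary decides the existence of an $\Ords$-constrained simultaneous embedding in polynomial time and constructs one whenever it exists. Adding the $O(n)$ cost of obtaining each $\rho_i$ leaves the overall bound polynomial.

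I do not anticipate a genuine obstacle here; the entire content is the observation that Hamiltonicity removes the (potentially exponential) ambiguity in the choice of topological ordering that is present for general upward planar digraphs, collapsing the unconstrained problem onto its constrained counterpart. The only point needing slight care is stating the equivalence precisely in both directions, so that neither spurious embeddings using a non-$\rho_i$ ordering are admitted nor valid ones excluded; the uniqueness of $\rho_i$ settles this cleanly.
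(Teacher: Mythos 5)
Your proposal is correct and follows exactly the paper's route: the paper derives this corollary from Corollary~\ref{corol:threegraphs} via the observation stated just before it, namely that a Hamiltonian upward planar digraph has a unique topological ordering computable in $O(n)$ time. Your write-up merely makes explicit the equivalence (unconstrained $\Leftrightarrow$ $\Ords$-constrained, via uniqueness of $\rho_i$) that the paper leaves implicit in its one-line derivation.
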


\section{Complexity issues}\label{sec_hard} 
\subsection{Monotone Simultaneous Embeddings of Paths}\label{sec_hard_1} 

In this section we discuss the relation of monotone simultaneous embeddings of paths to stretchability of pseudolines and realizability of circular sequences. 

Goodman, Pollack, and Sturmfels~\cite{exponential} showed that for each $n$ there exist order types with $n$ elements such that any realization has coordinates that are doubly exponential in the number of points.
Suppose we are given such a realization.
Then, we can add a vertical line between every two consecutive crossings as well as before and after all crossings, and derive a vector and a path for each vertical line.
Given these vectors and paths as an input, our linear program will produce a solution whose binary representation is exponential in the number of lines.
This is, however, no contradiction to the fact that we have a polynomial-time algorithm when we are given the relative distance of the vertical lines: these distances will have a binary representation that is exponential in the number of lines as well, and hence, the algorithm is still polynomial in the input size.
On the other hand, if we are only given the sequence of paths (i.e., the circular sequence of the set), the size of a solution will be exponential in the input size, and hence, there cannot be a polynomial-time algorithm for giving a set of points (assuming a sufficiently strict model of computation).
This fact, however, does not imply intractability of \emph{deciding} the simultaneous embeddability of a sequence of paths, even though the possibly large representation of a realization suggests $\exists \mathbb{R}$-hardness (see~\cite{schaefer}, where these complexity topics are discussed, and, e.g.,~\cite{kratochvil}, where similar issues arise).

With respect to deciding simultaneous embeddability, it is interesting to observe the relation between the problem of stretchability of arrangements of pseudo-lines and our setting.
Obviously, an algorithm deciding stretchability of pseudo-line arrangements also decides whether a simultaneous embedding of a sequence of paths exists.
On the one hand, deciding stretchability is known to be NP-hard~\cite{shor}, and actually equivalent to the existential theory of the reals~\cite{mnev}.
On the other hand, NP-hardness of the stretchability problem does not directly imply NP-hardness of the problem at hand.
However, the problem of deciding whether there exists a point set for a given allowable sequence can easily be reduced to our problem:
add a path for every index sequence of the circular sequence (this corresponds to placing a vertical line directly to the right of each crossing in the corresponding pseudo-line arrangement in the Euclidean plane).
If there exists a simultaneous embedding of this sequence of paths, then the given allowable sequence can be realized.

Note that there is a significant difference between stretchability of a pseudo-line arrangement and realizability of an allowable sequence.
Goodman and Pollack~\cite{non_circular} give an allowable sequence on five elements that is not the circular sequence of any point set, while the smallest non-stretchable pseudo-line arrangement in the projective plane has nine pseudo-lines~\cite{realizable8} (i.e., Ringel's construction using Pappus' Theorem).
We are not aware of any work showing hardness of deciding realizability of allowable sequences (it is not obvious to us that, e.g., Shor's construction~\cite{shor} can also be used in the more constrained setting using allowable sequences, in particular after the transformation to an arrangement with no three pseudo-lines sharing a point).

\subsection{A Closely Related Problem}

The hardness of deciding stretchability of pseudo-lines in the projective plane can be reduced in a straight-forward manner to the following problem closely related to monotone simultaneous embeddings.
Given a sequence $\seq{P_1,\dots,P_k}$ of paths, each containing a subset of arbitrary size of a vertex set $V$, is there a monotone simultaneous embedding such that the directions of monotonicity appear radially around the origin? We call this problem \textsc{General Monotone Simultaneous Embedding} (\emph{GMSE}, for short).
GMSE is more general than our original problem in the sense that  the paths might not contain all the vertices.
Since GMSE requires that the directions of monotonicity are radially ordered, it might seem that GMSE is more restricted than our original problem. However, this is not the case, because for the original monotone simultaneous embedding of a set of paths, only one ordering of the paths has to be considered, as was proven in Proposition~\ref{alowable_set}. However, GMSE is more restricted than the \textsc{Strictly Monotone Trajectory Drawing} problem, which was proven to be NP-hard~\cite{pampel}, 
since in the latter it is not required that the directions of monotonicity are radially ordered.  

 %, and we can use the paths to capture only the orientation of triples (the abstract order type), instead of a whole snapshot of the allowable sequence.
 
\begin{proposition}
	\textsc{General Monotone Simultaneous Embedding} (GMSE) is NP-hard.
\end{proposition}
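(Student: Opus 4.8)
The plan is to reduce from stretchability of pseudo-line arrangements in the projective plane, which is NP-hard~\cite{shor} (indeed $\exists\mathbb{R}$-complete~\cite{mnev}). The reason GMSE, unlike the original problem, admits a direct reduction from stretchability is precisely the extra freedom granted by paths on arbitrary subsets of $V$: with full paths one can only prescribe a complete circular sequence and is thus reduced from realizability of a \emph{fixed} allowable sequence (which, as discussed above, is not known to be NP-hard), whereas subset paths let us prescribe only the combinatorial type of the arrangement -- equivalently, its oriented matroid / chirotope -- while leaving the realization free to choose the actual order in which its crossings occur.

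Concretely, I would take an instance of projective stretchability given as an arrangement of $n$ pseudo-lines, equivalently a rank-$3$ chirotope assigning an orientation to every triple of the ground set $\{1,\dots,n\}$. I introduce one vertex $v_i$ per element and work in the dual of Observation~\ref{obs:dual_problem}, so that a sought solution is a set of $n$ non-vertical lines $s_1,\dots,s_n$ whose arrangement must realize the prescribed chirotope. The key tool is the standard fact about circular sequences that the orientation of a triple $\{i,j,k\}$ is encoded by the \emph{order} in which the three transpositions among $s_i,s_j,s_k$ occur as the sweep direction rotates: reversing the orientation reverses this order. Restricting attention to a triple is exactly what a path on the $3$-element subset $\{v_i,v_j,v_k\}$ accomplishes. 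Hence for every triple I build a small gadget consisting of subset paths on $\{v_i,v_j,v_k\}$, placed at suitable radial directions, that forces the three pairwise swaps of that triple to occur in the order dictated by the chirotope, and prescribes nothing else. Listing the directions of all gadgets in one global radial order obtained from any combinatorial sweep (wiring diagram) consistent with the chirotope yields the required sequence $\seq{P_1,\dots,P_k}$; the construction is polynomial since there are $O(n^3)$ triples.

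For correctness I would argue both directions through the circular-sequence characterization. If the GMSE instance has a solution, Observation~\ref{obs:dual_problem} gives a point set (dually, straight lines) whose circular sequence agrees with every triple gadget; since each gadget pins the orientation of its triple, the point set realizes the chirotope, so the arrangement is stretchable. Conversely, any stretched realization has the prescribed chirotope, hence for every triple the three swaps occur in the correct order, so the realization satisfies every gadget and, read off in radial order, constitutes a GMSE solution with radially ordered directions.

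The main obstacle I anticipate is reconciling the single, a priori fixed radial order of the constructed sequence $\seq{P_1,\dots,P_k}$ with the fact that two realizations of the same chirotope may sweep their crossings in genuinely different global orders (different wiring diagrams). The gadgets must therefore sample each triple only at positions whose subset order is determined by the chirotope alone -- that is, by how many of the triple's three swaps have already happened -- so that they are satisfied by every chirotope-correct realization irrespective of how the swaps of different triples interleave, while being strong enough to force the orientation. Striking this balance, and verifying that the chosen global order is radially achievable in both directions of the equivalence, is exactly the delicate point, and it is where the restriction to subset paths (rather than full paths) does the essential work.
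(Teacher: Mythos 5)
Your reduction has the same source as the paper's (projective stretchability~\cite{shor}) and your gadgets are essentially the paper's gadgets: three-element subset paths that record the permutation of a triple between consecutive swaps, which pins each triple's orientation and hence the whole chirotope. This makes the backward direction (a GMSE solution yields a stretching) sound in both write-ups, and your count of $O(n^3)$ paths matches. The genuine gap is in the forward direction, and it is exactly the point you flag but leave unresolved: the choice of the single global radial order. You propose to take it from ``any combinatorial sweep (wiring diagram) consistent with the chirotope,'' but this does not work as stated. A wiring diagram fixes an allowable sequence, and there exist allowable sequences whose chirotope is realizable but which are not the circular sequence of any point set --- Goodman and Pollack's five-element example~\cite{non_circular}, cited in Section~\ref{sec_hard_1}, is of this kind, since every five-point chirotope is realizable. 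Your snapshot paths do impose only a weakening of the chosen sweep's crossing order (the prescribed relation ``crossing $c$ before crossing $c'$'' only forces a realization to place $c$ before the \emph{next} swap of $c'$'s triple following $c'$, because each snapshot stays valid until that next swap), but you give no argument that this weakened order is satisfiable by some straight-line realization when the sweep is arbitrary. Concretely, if your chosen sweep puts a non-final swap of a triple $T'$ after an independent swap of a triple $T$, then a realization in which all three swaps of $T'$ precede all swaps of $T$ violates the prescribed sequence outright; so for a badly chosen sweep the constructed instance can be infeasible even though the chirotope is stretchable, breaking the ``only if'' direction of the reduction.

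The paper fills precisely this hole with a specific device that your proposal is missing: it takes an $x$-monotone drawing of the pseudo-line arrangement in which every crossing is at its \emph{rightmost possible position}, and reads the global order of the snapshot paths off that drawing. The point of this normalization is that the relative order of crossings whose order is not forced by the triple constraints is then constrained only against the highest level of the partial order on crossings, so that (as the paper argues) the prescribed sequence imposes no restriction on the circular sequence of a prospective realization beyond the chirotope itself; consequently any stretching satisfies the instance. Your own closing paragraph correctly identifies that ``verifying that the chosen global order is radially achievable'' is the delicate point, but identifying the obstacle is not the same as overcoming it: without the rightmost-position normalization (or an equivalent canonical choice of sweep together with a proof that the induced weakened order is realization-independent), the central step of your argument is unsupported.
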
 
\begin{proof}
Take any $x$-monotone drawing of the pseudo-line arrangement in the Euclidean plane such that every crossing is at its rightmost possible position.
Let $P_1$ be a path that contains all points in the order given by the line $\dir_1 : x = -\infty$.
Sweep the arrangement in $x$-direction.
At a crossing between the pseudo-lines $s_i$ and $s_j$ (suppose w.l.o.g.\ $I(i,P_1) < I(j,P_1)$), we add the paths $\seq{s_j, s_i, s_l}$ or $\seq{s_l, s_j, s_i}$ for all other elements $s_l$, depending on whether the pseudo-line $s_l$ is above or below the crossing $s_i s_j$.
This clearly encodes the orientation of all triples, and therefore the pseudo-line arrangement in the projective plane is unambiguously fixed.
Also, since the crossings are drawn at their rightmost possible positions, no further restrictions are imposed on the circular sequence of the resulting set (if such a set exists).
If two crossings are independent, we do not care about their relative order, and this is also not captured by the relative order of two paths in which two independent pairs are required to be swapped;
the relative order of the crossings is partial, and moving all crossings to the right as far as possible corresponds to constraining the relative order for each crossing in the highest level of the partial order.
Hence, if and only if there exists a monotone simultaneous embedding of the paths, the pseudo-line arrangement is stretchable.
\end{proof}

\section{Conclusion}\label{sec_conc} 
In this paper we considered monotone simultaneous embeddings of sets of upward planar digraphs, with both predefined and arbitrary directions of monotonicity, where we first concentrated on the special case of directed spanning paths with the same vertex set.

We proved that if a monotone simultaneous embedding of three directed paths exists, then it also exists for an arbitrary choice of directions with the same circular order. 
We also presented a polynomial-time decision- and construction algorithm. Further, we showed that the existence question for an arbitrary number of paths, but with predefined directions, can be solved in polynomial time as well. 

On the other hand, we showed that even if a monotone simultaneous embedding of three given paths exists, it might require an exponential (in the number of vertices) ratio of the smallest and largest distance between points of the embedding. 
Further, we showed that starting from $k=4$, not only the relative circular order of the directions but also the actual choice of the slopes influences monotone simultaneous embeddability.

We also considered the complexity of the problem for $k>3$ paths and arbitrary directions. In contrast to Theorem~\ref{thm:poly}, the construction problem becomes intractable for arbitrary directions, since the constructed embedding might require a representation using coordinates of exponential size. However, showing hardness of the decision question remains an open problem for $k>3$.

\iffalse

Our main results are the following: (1) For three directed paths we can decide in polynomial time whether a monotone simultaneous embedding exists, by this answering a question posed in~\cite{Asinowski2008}, and (2) For any set of paths and predefined directions of monotonicity, the existence of monotone simultaneous embeddings is polynomial in the number of vertices and the number of paths. 

Our main open question is whether the existence of monotone simultaneous embeddings is polynomially tractable for a fixed number of $k>3$ paths if the directions of the monotonicity are not provided as part of the input. 
%
\fi 

We further showed  several implications of the simultaneous embedding of directed paths to upward planar digraphs. However, we had to restrict considerations to the setting where an upward planar digraph is provided together with a topological ordering. The question which remains open is whether our results can be extended to the case where the topological ordering is not provided as a part of the input. 

As mentioned in the introduction, there are three types of simultaneous embeddings, distinguished by the way the edges are represented: geometric simultaneous embeddings, simultaneous embeddings with ``fixed edges'', and simultaneous embeddings without restrictions on the edge representation. We observe that for monotone simultaneous embeddings of paths, it holds that, if the edges can be drawn monotone, then they can also be drawn straight-line. Thus, our monotone simultaneous embeddings of sets of paths are geometric simultaneous embeddings. However, in the case of upward planar digraphs, our algorithm produces a drawing with polygonal path edges for every upward planar digraph. Thus, the same edge might be drawn as different polygonal paths for different upward planar digraphs. In this sense, our monotone simultaneous embeddings of upward planar digraphs are just simultaneous (i.e., without special representation properties of the edges). 
It might be interesting to study what happens if the same edge has to be realized the same way in all of the drawings, i.e., monotone simultaneous embeddings with fixed edges. It seems clear that in this case, if two graphs share an edge, they should have related directions of monotonicity.        

Furthermore, it might also be interesting to consider a problem that is ``between'' the setting with predefined directions and the one with arbitrary directions.
More specifically, let ${\cal A}=\{\alpha_1,\dots,\alpha_k\}$ be a set of wedges centered at the origin. 
If a set $\Paths=\{P_1,\dots,P_k\}$ of paths  admits a $\{\dir_1,\dots,\dir_k\}$-monotone simultaneous embedding, such that $\dir_i \in \alpha_i$, we say that $\Paths$ admits an ${\cal A}$-monotone simultaneous embedding. As a generalization of monotone simultaneous embeddings with fixed directions, it would be interesting to study the computational complexity of deciding whether a set $\Paths$ of paths admits an ${\cal A}$-monotone simultaneous embedding.

\paragraph{Acknowledgements.} 
Research on this topic was 
initiated during a research visit of Sarah Lutteropp and Tamara Mchedlidze in February 2013 in Graz, Austria.
We thank anonymous referees for helpful comments and for making us aware of the related work~\cite{Asinowski2008}.

\bibliographystyle{abbrv}
\bibliography{bibliography}

\end{document}